\documentclass[journal,draftcls,onecolumn,12pt,romanappendices,twoside]{IEEEtranTCOM}
\normalsize

\ifCLASSINFOpdf

\else

\fi

\usepackage{amsmath}
\usepackage{graphicx}
\usepackage{amssymb}
\usepackage{amsthm}
\usepackage{cite}
\usepackage{mathtools}
\usepackage{bbold}
\usepackage{steinmetz}
\usepackage{stmaryrd}
\usepackage{bbm, dsfont}
\usepackage{verbatim}
\usepackage[latin1]{inputenc}
\usepackage{tikz}
\usetikzlibrary{shapes,arrows}
\usepackage{caption}
\usepackage{enumerate}
\usepackage{color}

\usepackage{steinmetz}

\interdisplaylinepenalty=1000 
\usepackage{lipsum}
\usepackage[variablett]{lmodern}
\usepackage{amsbsy}
\usepackage{bm}
\usepackage{verbatim}
\usepackage[latin1]{inputenc}
\usepackage{tikz}
\usetikzlibrary{shapes,arrows}
\usepackage[font=footnotesize]{caption}
\usepackage{enumerate}
\usepackage{color}
\usepackage{subfig}
\usepackage{pgfplots}
\pgfplotsset{compat=1.10}
\usepgfplotslibrary{fillbetween}
\usetikzlibrary{calc}

\usepackage{lipsum}
\usepackage[variablett]{lmodern}
\usepackage{amsbsy}
\usepackage{bm}
\usetikzlibrary{positioning}
\usetikzlibrary{shapes,arrows}

\pgfplotsset{
	every tick label/.append style={scale=1},
	every axis/.append style={
	}
}

\hyphenation{op-tical net-works semi-conduc-tor}
\newcommand{\coutone}{\mathbf{y}}

\newcommand{\chek}[1]{\quad\href{run:./references/#1}{\checkmark}}

\newcommand{\ubtw}{U_{\text{MNC}}}

\newcommand{\nlprmz}{\eta}
\newcommand{\cinzero}{\mathbf{x}}
\newcommand{\coutzero}{\mathbf{y}}

\newcommand{\cinone}{\mathbf{x}}

\newcommand{\nlprm}{\eta}

\newcommand{\cprm}{B}

\newcommand{\vrnd}{\mathbf{v}}
\newcommand{\betprm}{\kappa}
\newcommand{\noise}{\mathbf{n}}
\newcommand{\alfprm}{\zeta}
\newcommand{\lmdprm}{\lambda}

\newcommand{\gmmprm}{\mu}

\newcommand{\srnd}{\mathbf{s}}
\newcommand{\urnd}{\mathbf{t}}
\newcommand{\wrnd}{\mathbf{w}}
\newcommand{\gfun}{g}

\newcommand{\qfun}{q}

\newcommand{\lbone}{\mathrm{L}_{\mathrm{RPC}}}
\newcommand{\ubhp}{\tilde{\mathrm{U}}_{\mathrm{RPC}}}
\newcommand{\ublp}{\mathrm{U}_{\mathrm{RPC}}}




\newcommand{\red}{\mathcal{R}}                                               

                                               
\newcommand{\mf}{\mathbf}                                                






  




                   %
  

\newcommand{\one}{\mathbb{1}}					               
                         
\newcommand{\expe}{\mathrm{E}}

\newcommand{\absol}[1]{|#1|}

\newcommand{\xx}{{\mathbf{x}}}

\newcommand{\yy}{{\mathbf{y}}}

\newcommand{\be}{\begin{equation}}
\newcommand{\ee}{\end{equation}}
\newcommand{\bs}{\begin{split}}
	\newcommand{\es}{\end{split}}

\newtheorem{theorem}{Theorem}

\newcommand{\bal}{\begin{align}}
	
	\newcommand{\eal}{\end{align}}

\newcommand{\rlp}[1]{\left(#1\right)}
\newcommand{\rla}[1]{\left\{#1\right\}}
\newcommand{\rlb}[1]{\left[#1\right]}
\newcommand{\rlpo}[1]{\lefto(#1\right)}
\newcommand{\rlao}[1]{\lefto\{#1\right\}}
\newcommand{\rlbo}[1]{\lefto[#1\right]}
\newcommand{\po}{P}
\newcommand{\pn}{P_N}

\newcommand{\lefto}{\mathopen{}\left}
\newcommand{\distas}{\sim}					
\newcommand{\given}{\,\vert\,}				


\newcommand{\di}{\mathop{}\!\mathrm{d}}

\newcommand{\chone}{RPC}
\newcommand{\chtwo}{MNC}
\newcommand{\chthree}{LPC}
\newcommand{\cone}{\mathcal{C}_{\mathrm{RPC}}}
\newcommand{\ctwo}{\mathcal{C}_{\mathrm{MNC}}}
\newcommand{\cthree}{\mathcal{C}_{\mathrm{LPC}}}

\usepackage[prefix=sol]{xcolor-solarized}
\begin{document}
\title{Bounds on the {Per-Sample} Capacity of {Zero-Dispersion} Simplified Fiber-Optical Channel Models}

\author{Kamran~Keykhosravi,~\IEEEmembership{Student~Member,~IEEE,}
	Giuseppe~Durisi,~\IEEEmembership{Senior~Member,~IEEE,}\\
        and~Erik~Agrell,~\IEEEmembership{Fellow,~IEEE.}
\thanks{ This work
	was supported by the Swedish Research Council (VR) under Grant 2013-5271. }
	\thanks{This work was presented in part at the $43$th European Conference on Optical Communication (ECOC 2017). }
\thanks{The authors are with the Department of Electrical Engineering,
	Chalmers University of Technology, Gothenburg 41296, Sweden (e-mail:
	kamrank@chalmers.se; durisi@chalmers.se; agrell@chalmers.se).}
}

\maketitle

\begin{abstract}
A number of simplified models, based on perturbation theory, have been proposed for the fiber-optical channel and have been extensively used in the literature. 
Although these models are mainly developed for the low-power regime,   they are used at moderate or  high powers as well. 
It remains unclear to what extent the capacity of these models is affected by the simplifying assumptions under which they are derived. 
 { In this paper,  we  consider single-channel data transmission based on three continuous-time optical  models \textit{i)} a regular perturbative channel, \textit{ii)} a logarithmic perturbative channel, and \textit{iii)} the stochastic nonlinear  {Schr\"odinger} (NLS) channel.
We apply two simplifying assumptions on these channels to  obtain analytically tractable discrete-time models. Namely, we neglect the channel memory (fiber dispersion) and we use a sampling receiver.  These assumptions bring into question the physical relevance of the models studied in the paper. Therefore, the results should be viewed as a first step toward analyzing more realistic channels.
  We investigate the per-sample capacity of the simplified discrete-time models. Specifically,   \textit{i)} we establish tight bounds on the capacity of the regular perturbative channel; \textit{ii)} we obtain the capacity of the logarithmic perturbative channel; and \textit{iii)} we present a novel upper bound on the capacity of the zero-dispersion NLS channel.
  Our results illustrate that the capacity of these models departs from each other at high powers because these models yield  different capacity pre-logs.
	Since all three models are based on the same physical channel, our results  highlight that care must be exercised in using  simplified channel models in the high-power regime.}

\end{abstract}

\begin{IEEEkeywords}
Achievable rate, channel capacity, information theory,  nonlinear channel, optical fiber.
\end{IEEEkeywords}

\IEEEpeerreviewmaketitle

\section{Introduction}

 \IEEEPARstart{T}{he} vast majority of the global Internet traffic is conveyed through   fiber-optical networks, which form the backbone of our information society. To cope with the growing data demand, the fiber-optical  networks have evolved from  regenerated direct-detection systems to  coherent wavelength division multiplexing (WDM) ones.  Newly emerging bandwidth-hungry services, like internet-of-thing applications and cloud processing, require even higher data rates. Motivated by this ever-growing demand, an increasing attention has been devoted in recent years to the analysis of the capacity  of the  fiber-optical channel.

Finding the  capacity of  the fiber-optical channel that is governed by the stochastic nonlinear  
{Schr\"odinger} (NLS) equation \cite[Eq.~(1)]{mecozzi1994limits}, which captures the effects of Kerr nonlinearity, chromatic dispersion, and amplification noise, remains an open problem. 
An information-theoretic analysis of the NLS channel is cumbersome because of a complicated signal--noise interaction caused by the  interplay between the nonlinearity and the dispersion{\cite{essiambre_2010_jlt}. In general,  capacity analyses of optical fibers are performed either by considering simplified channels, or by evaluating mismatched decoding lower bounds \cite{merhav1994information} via simulations (see \cite{secondini2017scope} and \cite[Sec.~I]{ghozlan2017models} for  excellent literature reviews).  } 
  {Lower bounds based on the mismatch-decoding framework   go to zero after reaching a maximum  (see, for example, \cite{secondini_2013_jlt,fehenberger2015achievable,mitra_2001_nature,Ellis_2010_jlt,essiambre_2010_jlt}). Capacity lower bounds with a similar behavior are also reported in \cite{dar_2014_ol}. In \cite{splett_1993_ecoc}, it has been shown that the maximum value of a capacity lower bound can be increased by increasing  fiber dispersion, which mitigates the effects of nonlinearity.} To establish a capacity upper bound,   Kramer \textit{et al.} \cite{Kramer_2015_itw} used the split-step Fourier (SSF) method, which is  a standard approach to solve the NLS equation numerically \cite[Sec.~2.4.1]{agrawal_2007_nfo}, to derive a discrete-time channel model.  They proved that the capacity of this discrete-time model is upper-bounded by that of an equivalent AWGN channel.  In contrast to the available lower bounds, which fall to zero or saturate at high powers, this upper bound, which is the only one available for a realistic fiber channel model, grows unboundedly.

Since the information-theoretic analysis of the NLS channel is difficult, to approximate  capacity one can resort to simplified models, a number of which have been studied in the literature (see \cite{Erik_IT_friendly} and references therein for a recent review).
Two approaches to obtain such  models are to use the regular perturbation or the logarithmic perturbation methods.
 In the former,  the effects of nonlinearity are captured by   an additive perturbative term  \cite{peddanarappagari1997volterra,mecozzi_2000_ptl}.   
This approach yields a discrete-time channel with input--output relation  $\mf y_l= \mf x_l +\Delta \mf x_l+\mf n_l$ \cite[Eq.~(5)]{Erik_IT_friendly},
where  $\mf x_l$ and $\mf y_l$ are the transmitted and the received symbols, respectively; $\mf n_l$ is the amplification noise; and $\Delta \mf x_l$ is the  perturbative nonlinear distortion.
This model holds under the simplifying assumption that both  the nonlinearity and the signal--noise interaction are weak, which is reasonable only at low power.

Regular perturbative fiber-optical channel models, with or without memory, have been extensively investigated in the literature. 
 In  \cite{Meccozzi_2012_jlt}, a first-order perturbative model for WDM systems with arbitrary  filtering and sampling demodulation, and coherent detection is proposed. { The accuracy of the model is assessed by comparing the value of a mismatch-decoding lower bound, which is derived  analytically based on the perturbative model, with simulation results over a realistic fiber-optical channel.} {A good agreement  at all power levels is observed.}
 The capacity of a  perturbative multiple-access channel is studied in \cite{taghavi_2006_tinf}.  It is shown that the nonlinear crosstalk between  channels does not affect the capacity region when the information from all the channels is optimally used at each detector.
  However, if joint processing is not possible { (it is typically computationally demanding \cite{dar2014accumulation})}, the channel capacity is limited by the inter-channel distortion.

Another class of simplified models, which are equivalent to the regular perturbative ones up to a first-order  linearization, is that of logarithmic perturbative models, where the nonlinear distortion term $\Delta \mf x_l$ is modeled as a phase shift. This yields a discrete-time channel with input--output relation $\mf y_l= \mf x_l e^{j\Delta \mf x_l}+\mf n_l$ \cite[Eq.~(7)]{Erik_IT_friendly}.
{ In \cite{ghozlan2017models}, a single-span optical channel model for a two-user WDM transmission system is developed from a coupled NLS equation, neglecting the dispersion effects within the WDM bands. The channel model in \cite{ghozlan2017models} resembles the  perturbative logarithmic models. The authors study the capacity region of this channel in the high-power regime. 
	It is shown that the capacity pre-log pair  (1,1) is achievable, where the capacity pre-log is defined as the asymptotic limit of $\mathcal{C}/\log P$ for $P\to\infty$, where $P$ is the input power and $\mathcal{C}$ is the  capacity.}

Despite the fact that the aforementioned simplified channels  are valid in the low-power regime,  these  models are often used also in the  moderate- and high-power regimes.
Currently, it is unclear to what extent the simplifications used to obtain these models  influence the capacity at high powers. 
To find out, we study the capacity of two { single-channel} memoryless perturbative  models, namely, a \textit{regular perturbative channel (\chone)},  and a \textit{logarithmic perturbative channel (\chthree)}.
To assess  accuracy of these two perturbative models, we investigate also the { per-sample} capacity of a \textit{memoryless NLS channel (\chtwo)}.

{ The analysis in this paper suffers from two shortcomings. First, the channel memory is ignored, i.e., the dispersion is set to zero. Second, a sampling receiver is used to obtain discrete-time models from continuous-time channels. 
	These two assumptions were first applied to the NLS equation in \cite{mecozzi1994limits} to obtain  an analytically tractable channel model. This channel model was developed also in  \cite{ho_2005_book,turitsyn_2003_prl,yousefi_2011_tinf} using different methods. In this paper, we  refer to this model as \chtwo{}. 
	Zero-dispersive channels do not model correctly the actual optical fiber and are used in the literature to perform theoretical analyses that are still out of reach in the dispersive case because of complexity.
	The sampling receiver also does not capture the effects of spectral broadening and neglects temporal correlation of the received signal. Hence, it is suboptimal. These shortcomings of the sampling receiver are elaborated upon in \cite{kramer2017information}, where the capacity of a nondispersive NLS channel with a band-limited receiver is upper-bounded. 
	Deploying the two above-mentioned  assumptions is a common approach to enable information-theoretic analyses of the fiber-optical channel. The results obtained in this way should only be considered as a first step towards investigating  more realistic channel models.      
	}

In \cite{turitsyn_2003_prl}, a lower bound on the per-sample capacity of the memoryless NLS  channel is derived,  which proves that the  capacity goes to infinity with power.
In \cite{yousefi_2011_tinf}, the capacity of the same channel is evaluated  numerically. Furthermore, it is shown that the capacity pre-log  is $1/2$. The only known nonasymptotic 
upper bound on the capacity of this channel is $\log(1+\mathrm{SNR})$ (bits per channel use) \cite{Kramer_2015_itw}, where $\mathrm{SNR}$ is the signal-to-noise ratio. This upper bound  holds also for the general case of nonzero dispersion.

The novel contributions of this paper are as follows. 
 First,  we tightly bound the capacity of the \chone{} model and prove that its capacity pre-log is 3.
Second, the capacity of the \chthree{} is readily shown to be the same as that of an AWGN channel with the same input and noise power. Hence, the capacity pre-log of the \chthree{} is $1$.
Third, we establish a novel upper bound\footnote{This upper bound was first presented in the conference version of this manuscript \cite{keykhosravi_ecoc_17}.} on the capacity of the \chtwo.  
Our upper bound improves the previously known upper bound \cite{Kramer_2015_itw} on the capacity of this channel significantly and together with the a proposed lower bound allows one to characterize the capacity of the \chtwo{} accurately.

Although all three  models represent the same physical optical channel, their capacities behave very differently in the high-power regime. 
This result highlights the profound impact of the simplifying assumptions on the capacity at high powers, and indicates that care should be taken in translating the results obtained based on these models into   guidelines for system design.

The rest of this paper is organized as follows. In Section~\ref{s2}, we  introduce  the three channel models. 
In Section~\ref{s3}, we present  upper and lower bounds on the capacity of these channels and establish the capacity pre-log  of the perturbative models.
   Numerical results are provided in Section~\ref{sec:num:example}.
    We conclude the paper in Section~\ref{sec:conc}. The proofs of all theorems are given in the appendices. 

\paragraph*{Notation}
Random quantities are denoted by boldface letters.
We use $\mathcal{CN}\rlpo{0,\sigma^2}$ to denote the complex zero-mean circularly symmetric Gaussian distribution with variance $\sigma^2$.  We write $\red\lefto( x\right)$,  $\absol{ x}$, and $\phase{x}$ to denote the real part, the absolute value, and the phase of a complex number $x$.
All  logarithms are in base two. The mutual information between two random variables $\mf{x}$ and $\mf{y}$ is denoted by $I(\mf{x};\mf{y})$. The entropy  and differential entropy are denoted by $H(\cdot)$ and $h(\cdot)$, respectively.   Finally, we use $\one(\cdot)$ for the indicator function, and $*$ for the convolution operator.

\section{Channel Models}\label{s2}
The fiber-optical  channel is well-modeled by the NLS equation, which describes  the propagation of a complex baseband electromagnetic field  through a lossy single-mode fiber  as 
\be
\frac{\partial  \mf a}{\partial z}+\frac{\alpha-g}{2}\mf a+j\frac{\beta_2}{2}\frac{\partial ^2 \mf  a}{\partial t^2}-j\gamma | \mf a|^2  \mf a= \mf n.\label{nlse}
\ee
Here, $\mf a =\mathbf{a}(z , t)$ is the complex baseband signal  at time $t$ and location $z$. The parameter $\gamma$ is the nonlinear coefficient, $\beta_2$ is the group-velocity dispersion parameter,  $\alpha$  is the attenuation constant, $g=g(z)$ is the  gain profile of the amplifier, and $\mf n=\mf n(z, t)$ is the  Gaussian amplification noise, { which is bandlimited because of the inline channel filters.} The third term on the left-hand side of \eqref{nlse} is responsible for the channel memory and the fourth term for the channel nonlinearity.  

To compensate for the fiber losses, two types of signal amplification can be deployed, namely, distributed and lumped amplification.  The former method compensates for the fiber loss continuously along the fiber, whereas the latter method boosts the signal power by dividing the fiber into several spans and using an optical amplifier at the end of each span. With distributed amplification, which we  focus on in  this paper, the noise can be described by the  autocorrelation function{\cite{essiambre_2010_jlt}}
\begin{equation}\label{eq:autocorolation}
\expe\rlbo{\mf n(z,t)\mf n^*(z',t')}=\alpha n_{\mathrm{sp}} h \nu \delta_{{W_N}}(t-t')\delta(z-z').
\end{equation}  
Here, $n_{\mathrm{sp}}$  is the spontaneous emission factor, $h$ is  Planck's constant, and $\nu$ is the optical carrier frequency. Also,   $\delta(\cdot)$ is the Dirac delta function and $\delta_{{W_N}}(x)=W_N \mathrm{sinc}(W_Nx)$, where $W_N$ is the noise bandwidth.
In this paper, we shall focus on the ideal distributed-amplification case $g(z)=\alpha$.

{We use a sampling receiver to go from continuous-time channels to  discrete-time ones. A comprehensive description of the sampling receiver and of the induced discrete-time channel is provided in \cite[Section~III]{yousefi_2011_tinf}. Here, we review some crucial elements of this description  for completeness.	Assume that a signal $\mf a(0,t)$, which is band-limited to $W_0$ hertz,  is transmitted through a zero-dispersion NLS channel ((1) with $\beta_2=0$) in the time interval $[0 , \mathcal{T}]$. Because of nonlinearity, the bandwidth of the received signal $\mf a(L,t)$ may be larger than that of $\mf a(0,t)$. To avoid signal distortion by the inline filters, we assume that $W_0$ is set such that $\mf a(z,t)$ is band-limited to $W_N$ hertz for $0\leq z\leq L$. Since $W_0\leq W_N$,  assuming $W_N\mathcal{T}\gg 1$, both the transmitted and the received signal can be represented by $2W_N\mathcal{T}$ equispaced samples. The transmitter encodes  data into  subsets of these samples of cardinality $2W_0\mathcal{T}$, referred to as the principal samples. At the receiver,  demodulation is performed by sampling $\mf a(L,t)$ at instances corresponding to the principal samples. This results in $2W_0\mathcal{T}$ parallel independent discrete-time channels that have the same input--output relation.

	The sampling receiver has a number of shortcomings \cite{kramer2017information} and using it should be considered  a simplification. The resulting discrete-time model is used extensively in the literature (see for example \cite{yousefi_2011_tinf,mecozzi1994limits,ho_2005_book,turitsyn_2003_prl,lau_2007_jlt,tavana_ecoc_18}), since it makes analytical calculation possible.   In this paper, we apply the sampling receiver not only to the memoryless NLS channel but also to the memoryless perturbative models. 
	}

 { Next,  we review two perturbative channel models that are used in the literature to approximate the solution of the NLS equation \eqref{nlse}.
 Among the multiple variations of  perturbative models available in the literature, we use the ones proposed in \cite{forestieri2005solving}.
 For both perturbative models,  first    continuous-time  dispersive   models are introduced, and then  memoryless discrete-time channels are developed by assuming that  $\beta_2=0$ and  by using a sampling receiver.
 Finally,  we introduce the  \chtwo{} model, which is derived from \eqref{nlse} under the two above-mentioned assumptions.
 }

{
\paragraph*{Regular perturbative channel (\chone) }
Let $\mf a^{\mathrm{li}}(z,t)$ be the solution of the linear noiseless NLS equation (Eq.~\eqref{nlse} with $\mathbf{n}(z,t)=0$ and $\gamma=0$). It can be computed as $\mf a^{\mathrm{li}}(z,t)=\mf a(0,t)*h(z,t)$, where $h(z,t)=\mathcal{F}^{-1}\rla{\exp\rlpo{j\beta_2\omega^2z/2}}$ and $\mathcal{F}^{-1}(\cdot)$ denotes the inverse Fourier transform. 
%
In the regular perturbation method,  the output of the NLS channel \eqref{nlse} is approximated as \cite[Eq.~(5)]{Erik_IT_friendly}
\begin{equation}\label{eq12}
\mf a(L,t)=\mf a^{\mathrm{li}}(L,t)+\Delta\mf a(L,t)+\mf w(L,t).
\end{equation}
Here, $L$ is the fiber length,  $\Delta\mf a(z,t)$ is the nonlinear perturbation term, and 
\begin{equation}\label{eqw}
	\mf w(L,t)=\int_0^L \mf n(z , t) \di z
	\end{equation}
 is the accumulated amplification noise. The  first-order approximation of $\Delta\mf a(L,t)$ is \cite[Eq.~(13)]{forestieri2005solving}
 \begin{equation}\label{eq11}
\Delta\mf a(L,t)=j\gamma\int\limits_0^L\rlb{\left|\mf a^{\mathrm{li}}(\zeta,t)\right|^2\mf a^{\mathrm{li}}(\zeta,t)}*h(L-\zeta,t)\di \zeta
 \end{equation}
 where the convolution is over the time variable.
 Neglecting dispersion (i.e., setting $\beta_2=0$), we have $h(z,t)=\delta(t)$ and $\mf a^{\mathrm{li}}(\zeta,t)= \mf a(0,t)$. Using this in  \eqref{eq11}, and then substituting \eqref{eq11} into \eqref{eq12}, we obtain 
 \begin{equation}\label{eq13}
 \mf a(L,t)=\mf a(0,t)+jL\gamma|\mf a(0,t)|^2\mf a(0,t)+\mf w(L,t).
 \end{equation}
 %
 %
%
%
%
%
Finally, by deploying sampling receiver, we obtain the  discrete-time channel model
\begin{IEEEeqnarray}{rCl}\label{eq:channel1}
	\mf y=\mf x + j\nlprm|\mf x|^2\mf x +\mf n.
\end{IEEEeqnarray}
Here, $\mf n\distas \mathcal{CN}\rlpo{0,\pn}$, 
  \begin{equation}\label{erik}
\pn=2\alpha n_{\mathrm{sp}} h \nu L W_N
  \end{equation}
   is the total noise power, and  
\begin{IEEEeqnarray}{rCl}\label{eq:nlprm}
	\nlprm=\gamma L.
\end{IEEEeqnarray}
We refer to  \eqref{eq:channel1} as the \chone.

\paragraph*{Logarithmic perturbative channel (\chthree)}
Another method for approximating the solution of the NLS equation \eqref{nlse} is to use logarithmic perturbation. With this method, the output signal is approximated  as \cite[Eq.~(7)]{Erik_IT_friendly}
\begin{equation}\label{eq:LP}
\mf a(L,t)=\mf a(0,t)\exp\rlpo{j\Delta\bm\theta(L,t)}+\mf w(L,t)
\end{equation}
where $\mf w(L,t)$ is the same noise term as in \eqref{eq12}--\eqref{eqw}. The first-order approximation of $\Delta\bm\theta(L,t)$ is \cite[Eq.~(19)]{forestieri2005solving}
\begin{equation}\label{eq15}
\Delta\bm\theta(L,t)=\frac{\gamma}{\mf a^{\mathrm{li}}(L,t)}\int\limits_0^L \rlb{\left|\mf a^{\mathrm{li}}(\zeta,t)\right|^2\mf a^{\mathrm{li}}(\zeta,t)}*h(L-\zeta,t)\di \zeta.
\end{equation}
Under the zero-dispersion assumption ($\beta_2=0$), we have $h(z,t)=\delta(t)$ and $\mf a^{\mathrm{li}}(\zeta,t)= \mf a(0,t)$. Using this in  \eqref{eq15}, and then substituting \eqref{eq15} into \eqref{eq:LP}, we obtain
 \begin{equation}\label{eq14}
 \mf a(L,t)=\mf a(0,t)e^{j\gamma L|\mf a(0,t)|^2}+\mf w(L,t).
 \end{equation}
Finally, by sampling the output signal, the  discrete-time channel 
\begin{IEEEeqnarray}{c}\label{eq:channel:one:step}
	\coutzero=\cinzero e^{j\nlprmz|\cinzero|^2}+\noise
\end{IEEEeqnarray}
is obtained, where $\noise\distas\mathcal{CN}\rlpo{0,\pn}$, $\pn$ is given in \eqref{erik}, and $\nlprm$ is defined in \eqref{eq:nlprm}.
We note that the channels \eqref{eq:channel1} and \eqref{eq:channel:one:step} are equal up to a first-order linearization, which is accurate in the low-power regime.
Furthermore, one may also obtain the model in \eqref{eq14}  by solving \eqref{nlse} for $\beta_2=0$, $\mf n=0$, and $g=\alpha$ and by adding the noise at the receiver.

\paragraph*{Memoryless NLS Channel (\chtwo)}
Here, we shall  study the underlying  NLS channel in \eqref{nlse} under the assumptions that $\beta_2=0$ and that  a sampling receiver is used to obtain a discrete-time channel.
Let $\mf r_0$ and $\bm \theta_0$ be the amplitude and the phase of a transmitted symbol $\mf x$, and let $\mf r$ and $\bm \theta$ be those of the  received samples $\mf y$.
The  discrete-time channel input--output relation can be described by the conditional probability density function (pdf) \cite[Ch.~5]{ho_2005_book} (see also \cite[Sec.~II]{lau_2007_jlt})
\begin{equation}\label{cd:pdf}
f_{\mf r,\bm \theta\mid \mf r_0, \bm \theta_0}(r,\theta|r_0,\theta_0)=\frac{f_{\mf r\mid \mf r_0}(r\mid r_0)}{2\pi}+\frac{1}{\pi}\sum_{m=1}^{\infty} \red\rlpo{C_m(r)e^{-jm(\theta-\theta_0)}}.
\end{equation}
The conditional pdf  $f_{\mf r\mid \mf r_0}(r\mid r_0)$ and the Fourier coefficients $C_m(r)$ in \eqref{cd:pdf} are given by
\begin{align}
f_{\mf r\mid \mf r_0}(r\mid r_0)&=\frac{2r}{P_N}\exp\rlpo{-\frac{r^2+r_0^2}{P_N}}I_0\rlpo{\frac{2rr_0}{P_N}}\\
C_m(r)&=2r\nu_m\exp\rlpo{-\rlp{r^2+r_0^2}\nu_m\cos x_m}I_m\rlpo{2rr_0\nu_m}.
\end{align}
Here, $I_m(\cdot)$ denotes the $m$th order modified Bessel function of the first kind, and\footnote{The complex square root in \eqref{eq17} is a two-valued function, but both choices give the same values of $\nu_m$ and $C_m(r)$.}
\begin{align}
x_m&=\rlp{\frac{2jm\gamma r_0^2\pn L}{2r_0^2+\pn}}^{1/2}\label{eq17}\\
\nu_m&=\frac{x_m}{\pn \sin x_m}.
\end{align}


In the next section, we study the capacity of the channel models given in \eqref{eq:channel1}, \eqref{eq:channel:one:step}, and~\eqref{cd:pdf}. Since all of these models are memoryless, their capacities under a power constraint $P$ are given by
\begin{align}
 \mathcal C=\sup \  &I(\xx;\yy)\label{capacity}
\end{align}
 where the supremum is  over all complex probability distributions of  $\mf x$ that satisfy the average-power constraint 
\begin{equation}
\expe\rlbo{|\mf x|^2}\leq P.\label{eq:power:constraint}
\end{equation}
}
\section{ Analytical Results}\label{s3}

In this section, we study the capacity of the \chone, the \chthree, and the \chtwo{} models. All these  models  are based on the same fiber-optical channel and share the same set of parameters. Bounds on the capacity of the  \chone{} in \eqref{eq:channel1} are provided in  Theorems~\ref{thm:lower_x3}--\ref{thm:x3:ubn:lp}.
 Specifically, in Theorem~\ref{thm:lower_x3} we establish a closed-form lower bound, which, together with the upper bound provided in Theorem~\ref{thm:ub:x3}, tightly bounds  capacity (see Section~\ref{sec:num:example}).
A different upper bound is provided  in Theorem~\ref{thm:x3:ubn:lp}. Numerical evidence suggests that this alternative bound is less tight than the one provided in Theorem~\ref{thm:ub:x3} (see Section~\ref{sec:num:example}). However, this alternative bound has a simple analytical  form, which makes it easier to characterize it asymptotically.
 By using the bounds derived in Theorem~\ref{thm:lower_x3} and Theorem~\ref{thm:x3:ubn:lp}, we prove that the capacity pre-log  of the \chone{} is  $3$.
 In Theorem~\ref{thm:onestep}, we present for completeness the (rather trivial) observation that the capacity of the \chthree{} in \eqref{eq:channel:one:step} coincides with that of an equivalent AWGN channel. Hence, the capacity pre-log is $1$.
Finally, in Theorem~\ref{thm:ub:nl:ssfm},  we provide  an upper bound on the capacity of the \chtwo{} in \eqref{cd:pdf}, which  improves the previous known upper bound \cite{Kramer_2015_itw} significantly, and, together with a proposed  capacity lower bound, yields a tight characterization of  capacity (see Section~\ref{sec:num:example}).

\subsection{Capacity Analysis of the \chone{}} 
\begin{theorem}\label{thm:lower_x3}\normalfont
The capacity  $\cone$ of the \chone{} in \eqref{eq:channel1}   is lower-bounded by 
\begin{align}
\cone\geq\lbone(\po)&=\max_{\lambda}\rlao{\log\rlpo{\frac{\lmdprm^2+6\nlprm^2}{\lmdprm^3  \pn}\ e^{\frac{12\nlprm^2}{\lmdprm^2+6\nlprm^2}}+1}} \label{eq:ch1:lb}
\end{align}
where $\lmdprm$ is positive and satisfies the  constraint
\begin{align}
 	\frac{18\nlprm^2+\lmdprm^2}{\lmdprm\rlp{6\nlprm^2+\lmdprm^2}}\leq\po.\label{eq:powr:cons}
\end{align}
 Furthermore, the maximum in \eqref{eq:ch1:lb} is achieved by the  unique real solution of the  equation
\begin{align}
\po \lmdprm^3&-\lmdprm^2 +6\po\nlprm^2\lmdprm-18\nlprm^2=0.\label{eq:lmd:tm}
\end{align}

\end{theorem}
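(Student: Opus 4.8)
The plan is to lower-bound $\cone$ by evaluating $I(\mathbf{x};\mathbf{y})$ for a suitably chosen input and then applying the entropy power inequality (EPI) to the output entropy. Since, conditioned on $\mathbf{x}$, the output of \eqref{eq:channel1} is a deterministic shift of $\mathbf{n}\distas\mathcal{CN}(0,\pn)$, we have $I(\mathbf{x};\mathbf{y}) = h(\mathbf{y}) - h(\mathbf{n}) = h(\mathbf{y}) - \log(\pi e \pn)$. Let $\mathbf{s} := \mathbf{x} + j\nlprm|\mathbf{x}|^2\mathbf{x}$ be the noiseless channel output; it is independent of $\mathbf{n}$, so the EPI for a complex (two-real-dimensional) random variable gives $2^{h(\mathbf{y})} \ge 2^{h(\mathbf{s})} + 2^{h(\mathbf{n})}$, and hence $\cone \ge \log\bigl(2^{h(\mathbf{s})-h(\mathbf{n})} + 1\bigr)$. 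The ``$+1$'' in \eqref{eq:ch1:lb} comes precisely from this step, so it remains to choose the input and evaluate $h(\mathbf{s})$.

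\emph{Choice of input.} I would take $\angle\mathbf{x}$ uniform on $[0,2\pi)$ and independent of $|\mathbf{x}|$, and let $\mathbf{u} := |\mathbf{x}|^2$ have the density $p_{\mathbf{u}}(u) = \frac{\lmdprm^3}{\lmdprm^2+6\nlprm^2}\,(1+3\nlprm^2 u^2)\,e^{-\lmdprm u}$ for $u \ge 0$, which is the shape singled out by a Lagrangian analysis of the entropy functional that appears below (the factor $\frac{\lmdprm^3}{\lmdprm^2+6\nlprm^2}$ being the normalization constant). A short computation gives $\expe[\mathbf{u}] = \frac{\lmdprm^2+18\nlprm^2}{\lmdprm(\lmdprm^2+6\nlprm^2)}$, so the average-power constraint $\expe[|\mathbf{x}|^2] \le \po$ becomes exactly \eqref{eq:powr:cons}.

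\emph{Evaluating $h(\mathbf{s})$.} This is the technical core. Writing $\mathbf{x} = \mathbf{r}e^{j\boldsymbol{\theta}}$ gives $\mathbf{s} = \boldsymbol{\rho}\,e^{j\boldsymbol{\phi}}$ with $\boldsymbol{\rho} = g(\mathbf{r})$, where $g(r) := r\sqrt{1+\nlprm^2 r^4}$, and $\boldsymbol{\phi} = \boldsymbol{\theta} + \arctan(\nlprm\mathbf{r}^2)$, which — because $\boldsymbol{\theta}$ is uniform — is itself uniform on $[0,2\pi)$ and independent of $\boldsymbol{\rho}$. Using the polar-to-Cartesian Jacobian, $h(\mathbf{s}) = h(\boldsymbol{\rho}) + \log(2\pi) + \expe[\log\boldsymbol{\rho}]$; then applying the change-of-variables formula for a monotone map first to $\boldsymbol{\rho} = g(\mathbf{r})$ and then to $\mathbf{r} = \sqrt{\mathbf{u}}$, and using the identity $g(r)g'(r) = r(1+3\nlprm^2 r^4)$ — the crucial simplification, in which the amplitude Jacobian $g'$ and the radial Jacobian $g$ combine neatly and the $\expe[\log\mathbf{u}]$ terms cancel — the expression collapses to $h(\mathbf{s}) = h(\mathbf{u}) + \log\pi + \expe[\log(1+3\nlprm^2\mathbf{u}^2)]$. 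Substituting $p_{\mathbf{u}}$, the $\log(1+3\nlprm^2\mathbf{u}^2)$ terms cancel against the corresponding part of $-\log p_{\mathbf{u}}$, leaving $h(\mathbf{u}) + \expe[\log(1+3\nlprm^2\mathbf{u}^2)] = \log\frac{\lmdprm^2+6\nlprm^2}{\lmdprm^3} + \bigl(1 + \frac{12\nlprm^2}{\lmdprm^2+6\nlprm^2}\bigr)\log e$, so that $h(\mathbf{s}) - h(\mathbf{n}) = \log\bigl(\frac{\lmdprm^2+6\nlprm^2}{\lmdprm^3 \pn}\,e^{12\nlprm^2/(\lmdprm^2+6\nlprm^2)}\bigr)$. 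Plugging this into the EPI bound gives \eqref{eq:ch1:lb}. I expect the nested changes of variables — together with checking that $\mathbf{s}$ admits a density with finite differential entropy, which legitimizes the EPI and the entropy identities — to be the main obstacle.

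\emph{The maximizing $\lmdprm$.} Finally, the objective $f(\lmdprm) := \frac{\lmdprm^2+6\nlprm^2}{\lmdprm^3 \pn}e^{12\nlprm^2/(\lmdprm^2+6\nlprm^2)}$ has strictly negative logarithmic derivative on $(0,\infty)$, hence is strictly decreasing there, while $\expe[\mathbf{u}](\lmdprm) = \frac{\lmdprm^2+18\nlprm^2}{\lmdprm(\lmdprm^2+6\nlprm^2)}$ decreases strictly from $+\infty$ to $0$; thus the feasible set $\{\lmdprm > 0 : \expe[\mathbf{u}](\lmdprm) \le \po\}$ is a half-line $[\lmdprm^\star,\infty)$, and the maximum of $f$ over it is attained at its left endpoint $\lmdprm^\star$, characterized by $\expe[\mathbf{u}](\lmdprm^\star) = \po$. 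Clearing denominators turns this equation into \eqref{eq:lmd:tm}; since that cubic equals $-\lmdprm(\lmdprm^2+6\nlprm^2)(\expe[\mathbf{u}](\lmdprm) - \po)$ for $\lmdprm > 0$, is strictly negative at $\lmdprm = 0$, and has no root with $\lmdprm \le 0$ (all of its terms are then non-positive and not all vanish), it has a unique real solution, which completes the proof.
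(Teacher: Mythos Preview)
Your proof is correct and follows essentially the same approach as the paper: apply the entropy power inequality to $\mathbf{y}=\mathbf{s}+\mathbf{n}$, choose a circularly symmetric input with $|\mathbf{x}|^2$-density proportional to $(1+3\eta^2 u^2)e^{-\lambda u}$, and evaluate $h(\mathbf{s})$ by a change of variables. The only cosmetic differences are that the paper computes $h(\mathbf{s})$ via the identity $h(\mathbf{s})=h(|\mathbf{s}|^2)+\log\pi$ and the scalar map $u\mapsto u+\eta^2 u^3$ (rather than your polar decomposition), and proves uniqueness of the cubic root by monotonicity on $[1/P,\infty)$ instead of your sign argument; your treatment of the nonpositive-$\lambda$ case is in fact slightly more complete.
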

\begin{proof}
See Appendix~\ref{app:prf:tm:ch1:lb}.
\end{proof}

\begin{theorem}\label{thm:ub:x3}\normalfont
	The capacity of the \chone{} in \eqref{eq:channel1}  is upper-bounded by 
	\begin{align}
	\cone&\leq\ublp(\po)\nonumber\\\label{eq:thm:ub:x3:hp}
		 &=\min_{\gmmprm>0,\ \lambda>0}\lefto\{\log{\frac{\gmmprm^2+6\nlprm^2}{\gmmprm^3e\pn}}+\lambda+ \max_{s>0}\rlao{\gmmprm \expe\rlbo{\qfun\rlpo{|\coutone|^2}\mid |\cinone|^2=s}\log e-\lambda\frac{s+\pn}{P+\pn}}\right\}.
	\end{align} 
Here, $\qfun(x)=g^{-1}(x)$, where $\gfun\rlpo{x}=x+\nlprm^2x^3$. 
\end{theorem}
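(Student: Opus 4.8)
The plan is to combine the duality (auxiliary-output) upper bound on mutual information with a purpose-built output law, and then to relax the average-power constraint with a Lagrange multiplier.

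First I would rewrite \eqref{eq:channel1} as $\coutone=\vrnd+\noise$ with $\vrnd=\cinone\rlpo{1+j\nlprm\absol{\cinone}^2}$, so that $\absol{\vrnd}^2=\absol{\cinone}^2\rlpo{1+\nlprm^2\absol{\cinone}^4}=\gfun\rlpo{\absol{\cinone}^2}$. Thus $\gfun$ is exactly the map taking the squared input magnitude to the squared magnitude of the noiseless output, and $\qfun=\gfun^{-1}$ inverts it. Two consequences will be used. (i) Given $\cinone$, $\coutone$ is a deterministic shift of $\noise\distas\jpg\rlpo{0,\pn}$, so $h\rlpo{\coutone\mid\cinone}=\log\rlpo{\pi e\pn}$. (ii) Since $\noise$ is circularly symmetric and $\absol{\vrnd}$ depends on $\cinone$ only through $\absol{\cinone}$, the conditional law of $\absol{\coutone}$ given $\cinone$ depends only on $\absol{\cinone}^2$; hence $\expe\rlbo{\qfun\rlpo{\absol{\coutone}^2}\mid\cinone}$ is a function of $\absol{\cinone}^2$ alone.

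Next, for any probability density $q$ on $\CC$ the duality bound gives, for every input distribution, $I\rlpo{\cinone;\coutone}\leq -h\rlpo{\coutone\mid\cinone}-\expe\rlbo{\log q\rlpo{\coutone}}=-\log\rlpo{\pi e\pn}-\expe\rlbo{\log q\rlpo{\coutone}}$. The key step is to choose the circularly symmetric density $q\rlpo{y}=Z^{-1}\exp\rlpo{-\gmmprm\qfun\rlpo{\absol{y}^2}}$, $\gmmprm>0$. Passing to polar coordinates and substituting $u=\gfun\rlpo{t}$, whose Jacobian is $\gfun'\rlpo{t}=1+3\nlprm^2 t^2$, yields $Z=\pi\int_0^\infty\rlpo{1+3\nlprm^2 t^2}e^{-\gmmprm t}\di t=\pi\rlpo{\gmmprm^2+6\nlprm^2}/\gmmprm^3$, so $q$ is a valid density. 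Inserting it into the duality bound and converting the natural exponent to base-two logarithms gives
\[
I\rlpo{\cinone;\coutone}\leq\log\frac{\gmmprm^2+6\nlprm^2}{\gmmprm^3 e\pn}+\gmmprm\log e\cdot\expe\rlbo{\qfun\rlpo{\absol{\coutone}^2}}.
\]
It remains to bound $\expe\rlbo{\qfun\rlpo{\absol{\coutone}^2}}$ over all inputs satisfying \eqref{eq:power:constraint}. Rewriting that constraint as $\expe\rlbo{\rlpo{\absol{\cinone}^2+\pn}/\rlpo{P+\pn}}\leq 1$ and introducing $\lambda>0$,
\[
\gmmprm\log e\cdot\expe\rlbo{\qfun\rlpo{\absol{\coutone}^2}}\leq\lambda+\expe\rlbo{\gmmprm\log e\cdot\qfun\rlpo{\absol{\coutone}^2}-\lambda\frac{\absol{\cinone}^2+\pn}{P+\pn}}\leq\lambda+\max_{s>0}\rlao{\gmmprm\log e\cdot\expe\rlbo{\qfun\rlpo{\absol{\coutone}^2}\mid\absol{\cinone}^2=s}-\lambda\frac{s+\pn}{P+\pn}},
\]
where in the last step I condition on $\absol{\cinone}^2$ using consequence (ii) and then replace the expectation over $\absol{\cinone}^2$ by its pointwise maximum. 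Combining the last two displays, the resulting bound on $I\rlpo{\cinone;\coutone}$ no longer depends on the (admissible) input distribution; hence $\cone$ is bounded by the same expression, and minimizing over $\gmmprm,\lambda>0$ yields $\cone\leq\ublp(P)$, i.e.\ \eqref{eq:thm:ub:x3:hp}.

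The creative step is the choice of $q$: making the exponent linear in $\qfun\rlpo{\absol{y}^2}$ is precisely what renders $Z$ elementary, because the substitution $u=\gfun(t)$ turns the integrand into the polynomial-times-exponential $\rlpo{1+3\nlprm^2 t^2}e^{-\gmmprm t}$. The part requiring most care is the power-constraint relaxation: since no peak or support constraint is imposed, the maximizing input may be heavy-tailed, so one cannot replace $\expe\rlbo{\qfun\rlpo{\absol{\coutone}^2}\mid\absol{\cinone}^2=s}$ by a Jensen bound and must retain the maximum over $s$. One should also check that this maximum is finite for suitable $\rlpo{\gmmprm,\lambda}$: concavity of $\qfun$ together with $\expe\rlbo{\absol{\coutone}^2\mid\absol{\cinone}^2=s}=\gfun(s)+\pn$ gives the linear-growth estimate $\expe\rlbo{\qfun\rlpo{\absol{\coutone}^2}\mid\absol{\cinone}^2=s}\leq s+\pn$, so for $\lambda$ large enough relative to $\gmmprm$ the bracket in the maximum is bounded above.
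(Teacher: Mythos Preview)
Your proposal is correct and follows essentially the same route as the paper. The only cosmetic difference is that the paper first passes to polar coordinates, bounds $h(\phase{\coutone}\mid|\coutone|)\le\log(2\pi)$, and then applies the duality bound to $h(|\coutone|^2)$ with the one-dimensional auxiliary density $\tilde f(v)=\betprm e^{-\gmmprm\qfun(v)}$, whereas you apply duality directly on $\CC$ with the circularly symmetric density $q(y)\propto e^{-\gmmprm\qfun(|y|^2)}$; the two are equivalent and produce the identical bound (your normalizing constant $Z=\pi/\betprm$). Your closing remark on finiteness of the inner maximum via concavity of $\qfun$ is a nice addition not present in the paper.
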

\begin{proof}
See Appendix~\ref{app:proof:thm:ub:x3}.
\end{proof}
Note that, given $|\xx|^2=s$, the random variable 
$
2|\coutone|^2/\pn
$ 
is conditionally distributed as a noncentral chi-squared random variable with $2$ degrees of freedom and noncentrality parameter $2(s+\nlprm^2s^3)/\pn$. This enables  numerical computation of $\ublp(\po)$.

\begin{theorem}\label{thm:x3:ubn:lp}\normalfont
The capacity of the \chone{} in \eqref{eq:channel1}  is upper-bounded by 
\begin{align}
	\cone\leq\ubhp(\po)=\label{eq:thm:ub:x3:hp2}
	\min_{\gmmprm>0}\rlao{\log\rlpo{\frac{\gmmprm^2+6\nlprm^2}{\gmmprm^3 e \pn} }+\gmmprm\rlpo{\po+\cprm}\log e}
\end{align} 
where
\begin{align}
	\cprm=\pn+\frac{\sqrt{\pi\pn}}{12^{3/8}\sqrt{(\sqrt{3}-1)\nlprm}}.
\end{align}
Furthermore, the minimum in \eqref{eq:thm:ub:x3:hp2} is achieved by the unique real solution of the equation  
\begin{align}
	&\rlp{\po+\cprm}\gmmprm^3-\gmmprm^2+6\nlprm^2\rlpo{\po+\cprm}\gmmprm-18\nlprm^2=0.\label{eq:gmmprm:def}
\end{align}

\end{theorem}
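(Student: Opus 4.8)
The plan is to bound the output differential entropy by a maximum-entropy argument with a reference density adapted to the cubic nonlinearity, then to replace the resulting expectation by a deterministic constant, and finally to minimize over $\gmmprm$. I would start from the first half of the argument behind Theorem~\ref{thm:ub:x3}. Write $\coutone=\mathbf{u}+\noise$ with noiseless output $\mathbf{u}=\cinone+j\nlprm\absol{\cinone}^2\cinone$, so that $\absol{\mathbf{u}}^2=\gfun\rlpo{\absol{\cinone}^2}$, where $\gfun(x)=x+\nlprm^2x^3$ and $\qfun=\gfun^{-1}$ as in Theorem~\ref{thm:ub:x3}. Conditioned on $\cinone$, $\coutone$ is $\mathcal{CN}(\mathbf{u},\pn)$-distributed, so $h(\coutone\mid\cinone)=\log(\pi e\pn)$. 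For $h(\coutone)$ I would use the radially symmetric density $q(y)=\frac{\gmmprm^3}{\pi\rlpo{\gmmprm^2+6\nlprm^2}}\exp\rlpo{-\gmmprm\qfun(\absol{y}^2)}$, whose normalizing constant follows by passing to $v=\absol{y}^2$ and then $v=\gfun(u)$:
\begin{equation}
\int_{\CC}e^{-\gmmprm\qfun(\absol{y}^2)}\di y=\pi\int_0^\infty e^{-\gmmprm u}\gfun'(u)\di u=\pi\int_0^\infty e^{-\gmmprm u}\rlpo{1+3\nlprm^2u^2}\di u=\pi\,\frac{\gmmprm^2+6\nlprm^2}{\gmmprm^3}.
\end{equation}
Using $h(\coutone)\le-\expe\rlbo{\log q(\coutone)}$ (valid for any density $q$) together with $h(\coutone\mid\cinone)=\log(\pi e\pn)$ then gives, uniformly over all admissible inputs and every $\gmmprm>0$,
\begin{equation}
I(\cinone;\coutone)\le\log\frac{\gmmprm^2+6\nlprm^2}{\gmmprm^3 e\pn}+\gmmprm\log e\cdot\expe\rlbo{\qfun(\absol{\coutone}^2)}.
\end{equation}
This is exactly the estimate underlying Theorem~\ref{thm:ub:x3} before the power constraint is enforced through a Lagrange multiplier; here I would instead bound the last expectation by a constant.

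The central step is to show $\expe\rlbo{\qfun(\absol{\coutone}^2)}\le\po+\cprm$ for every input satisfying \eqref{eq:power:constraint}. First, $\qfun=\gfun^{-1}$ is increasing with $\qfun'(w)=1/\rlpo{1+3\nlprm^2\qfun(w)^2}\in(0,1]$, so $\qfun$ is $1$-Lipschitz and $\qfun(a+b)\le\qfun(a)+b$ for $b\ge0$. Applying $\absol{\coutone}^2\le\absol{\mathbf{u}}^2+2\absol{\mathbf{u}}\absol{\noise}+\absol{\noise}^2$, peeling off the $\absol{\noise}^2$ term by Lipschitzness, and using $\absol{\mathbf{u}}^2+2\absol{\mathbf{u}}\absol{\noise}\le\rlpo{\absol{\mathbf{u}}+\absol{\noise}}^2$ gives $\qfun(\absol{\coutone}^2)\le G\rlpo{\absol{\mathbf{u}}+\absol{\noise}}+\absol{\noise}^2$, where $G(a):=\qfun(a^2)$. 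Since $G(\absol{\mathbf{u}})=\absol{\cinone}^2$, the mean-value theorem gives $G\rlpo{\absol{\mathbf{u}}+\absol{\noise}}\le\absol{\cinone}^2+\absol{\noise}\sup_{a\ge0}G'(a)$, and taking expectations with $\expe\rlbo{\absol{\cinone}^2}\le\po$, $\expe\rlbo{\absol{\noise}}=\frac{\sqrt{\pi\pn}}{2}$, and $\expe\rlbo{\absol{\noise}^2}=\pn$ yields $\expe\rlbo{\qfun(\absol{\coutone}^2)}\le\po+\pn+\frac{\sqrt{\pi\pn}}{2}\sup_{a\ge0}G'(a)$. It remains to evaluate $\sup_{a\ge0}G'(a)$: from $G'(a)=2a/\rlpo{1+3\nlprm^2G(a)^2}$ and the substitution $s=G(a)$ (so $a^2=\gfun(s)$), $\rlpo{\sup_{a\ge0}G'(a)}^2$ equals the maximum of $4\gfun(s)/\rlpo{1+3\nlprm^2s^2}^2$ over $s\ge0$, whose stationarity condition reduces to $3\rlpo{\nlprm^2s^2}^2+6\nlprm^2s^2-1=0$, i.e. $\nlprm^2s^2=(2\sqrt3-3)/3$. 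Substituting back (using $2-\sqrt3=(\sqrt3-1)^2/2$ and $2\sqrt3-3=\sqrt3\,(2-\sqrt3)$, and noting this unique positive root is the global maximum since $G'(a)\to0$ both as $a\to0^+$ and as $a\to\infty$) gives $\sup_{a\ge0}G'(a)=\frac{2^{1/4}}{3^{3/8}\sqrt{(\sqrt3-1)\nlprm}}$, so $\frac{\sqrt{\pi\pn}}{2}\sup_{a\ge0}G'(a)=\frac{\sqrt{\pi\pn}}{12^{3/8}\sqrt{(\sqrt3-1)\nlprm}}$, which is precisely the $\nlprm$-dependent term of $\cprm$. Hence $\expe\rlbo{\qfun(\absol{\coutone}^2)}\le\po+\cprm$.

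Combining this with the maximum-entropy estimate yields $\cone\le\min_{\gmmprm>0}\rlao{\log\rlpo{\frac{\gmmprm^2+6\nlprm^2}{\gmmprm^3 e\pn}}+\gmmprm\rlpo{\po+\cprm}\log e}=\ubhp(\po)$. To pin down the minimizer I would verify that the objective is strictly convex on $(0,\infty)$ — its second derivative, after clearing denominators, has numerator $\gmmprm^4+48\nlprm^2\gmmprm^2+108\nlprm^4>0$ — and is coercive as $\gmmprm\to0^+$ and $\gmmprm\to\infty$; hence the minimum is attained at the unique stationary point, which, on setting the derivative to zero and clearing denominators, is \eqref{eq:gmmprm:def}. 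The normalization integral and the convexity and stationarity computations are routine; the main obstacle is the explicit evaluation of $\sup_{a\ge0}G'(a)$, which is only elementary single-variable calculus but whose nested-radical manipulations produce the unusual constant $12^{3/8}\sqrt{\sqrt3-1}$, and where one must confirm that the positive root of the stationarity equation is the global maximum rather than a minimum or saddle.
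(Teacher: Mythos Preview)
Your proof is correct and follows essentially the same route as the paper: the duality bound on $h(\coutone)$ with the reference density $\propto e^{-\gmmprm\qfun(\absol{y}^2)}$, the estimate $\expe\rlbo{\qfun(\absol{\coutone}^2)}\le\po+\cprm$, and minimization over $\gmmprm$. The only differences are cosmetic: the paper bounds $\expe\rlbo{\qfun(\absol{\coutone}^2)}$ by applying the concavity inequality $\qfun(x+y)\le\qfun(x)+\qfun'(x)y$ twice to the exact expansion $\absol{\coutone}^2=\absol{\mathbf{u}}^2+\absol{\noise}^2+2\red(\mathbf{u}\noise^*)$, whereas you use the triangle inequality, Lipschitzness, and the mean-value theorem on $G(a)=\qfun(a^2)$---both routes reduce to the same supremum $\max_{t\ge0}t\,\qfun'(t^2)=\tfrac12\sup_{a\ge0}G'(a)$; and for the minimizer the paper argues coercivity plus a back-reference to the root analysis of the cubic in Appendix~\ref{sec:alf:and:lmd}, while your direct strict-convexity computation is a clean alternative.
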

\begin{proof}
	See Appendix~\ref{app:proof:ub:x3:lp}.
\end{proof}

\paragraph*{Pre-log analysis}
 By substituting $\gmmprm=1/P$ into \eqref{eq:thm:ub:x3:hp2}, we see that 
\begin{IEEEeqnarray}{c}
	\lim\limits_{\po\to\infty}\rlb{\cone-3\log(\po)}\leq \log\rlpo{\frac{6\nlprm^2}{\pn}}.\label{eq:ub:pl:ch1}
\end{IEEEeqnarray}
Furthermore, since
\begin{IEEEeqnarray}{rCl}
	\frac{18\nlprm^2+\lmdprm^2}{\lmdprm\rlp{6\nlprm^2+\lmdprm^2}}&\leq&\frac{18\nlprm^2+3\lmdprm^2}{\lmdprm\rlp{6\nlprm^2+\lmdprm^2}}\label{eq:pwchkr31}\\
	&=&\frac{3}{\lambda}
\end{IEEEeqnarray}
we can obtain a valid lower bound on $\cone{}$ by substituting  $\lmdprm=3/\po$ into \eqref{eq:ch1:lb}. Doing so, we obtain
\begin{align}
	\lim\limits_{\po\to\infty}\rlb{\cone-3\log(\po)}\geq\log\rlpo{\frac{2\nlprm^2e^2}{9\pn}}.\label{eq:lb:pl:ch1}
\end{align}
It follows from \eqref{eq:ub:pl:ch1} and \eqref{eq:lb:pl:ch1}  that the capacity pre-log of the \chone{} is 3. 

\subsection{Capacity Analysis of the \chthree{}}

\begin{theorem}\label{thm:onestep}\normalfont
	The capacity of the  \chthree{} in \eqref{eq:channel:one:step} is
	\begin{IEEEeqnarray}{c}\label{eq:AWGN}
		\cthree=\log\rlpo{1+\frac{\po}{\pn}}.
	\end{IEEEeqnarray}
	
	\begin{proof}
		We use the maximum differential entropy lemma \cite[Sec.~2.2]{elgamal_2011_net} to  upper-bound $\cthree$ by $\log\rlpo{1+{\po}/{\pn}}$. Then, we  note that we can achieve this upper bound by choosing $
		\cinzero\distas\mathcal{CN}\rlpo{0,\po}
		$.
		
	\end{proof}
\end{theorem}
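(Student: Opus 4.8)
The plan is to reduce \eqref{eq:channel:one:step} to a standard complex AWGN channel by exploiting that the nonlinearity is a deterministic, power-preserving bijection of $\CC$. Writing $\cinzero$ in polar form as $\mathbf{r}e^{j\boldsymbol{\theta}}$, the map $\cinzero\mapsto\cinzero e^{j\nlprm|\cinzero|^2}=\mathbf{r}e^{j(\boldsymbol{\theta}+\nlprm\mathbf{r}^2)}$ leaves $|\cinzero|$ unchanged and is invertible (from an output of modulus $r$ one recovers the input phase by subtracting $\nlprm r^2$). Hence, setting $\mathbf{x}'\triangleq\cinzero e^{j\nlprm|\cinzero|^2}$, we have $|\mathbf{x}'|^2=|\cinzero|^2$, so the constraint \eqref{eq:power:constraint} is preserved, and since $\cinzero-\mathbf{x}'-\coutzero$ and $\mathbf{x}'-\cinzero-\coutzero$ are both Markov chains, the data-processing inequality gives $I(\cinzero;\coutzero)=I(\mathbf{x}';\coutzero)$ for every admissible input law. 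The optimization \eqref{capacity} is therefore identical to that of the channel $\coutzero=\mathbf{x}'+\noise$ under $\expe[|\mathbf{x}'|^2]\le\po$.

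For the converse, I would argue as indicated right after the statement: $I(\cinzero;\coutzero)=h(\coutzero)-h(\coutzero\mid\cinzero)=h(\coutzero)-\log(\pi e\pn)$, because conditioned on $\cinzero$ the output is just $\noise$ shifted by a deterministic constant. Since $\noise$ is zero-mean and independent of $\cinzero$, the cross term in $\expe[|\coutzero|^2]$ vanishes, giving $\expe[|\coutzero|^2]=\expe[|\cinzero|^2]+\pn\le\po+\pn$; the maximum differential entropy lemma \cite[Sec.~2.2]{elgamal_2011_net} then yields $h(\coutzero)\le\log(\pi e(\po+\pn))$, so $I(\cinzero;\coutzero)\le\log(1+\po/\pn)$.

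For achievability I would take $\cinzero\distas\mathcal{CN}(0,\po)$, whose phase is uniform on $[0,2\pi)$ and independent of its modulus. Applying the modulus-measurable phase twist $e^{j\nlprm|\cinzero|^2}$ leaves this distribution invariant, so $\mathbf{x}'\distas\mathcal{CN}(0,\po)$ and remains independent of $\noise$; consequently $\coutzero=\mathbf{x}'+\noise$ is a genuine complex AWGN channel with $I(\mathbf{x}';\coutzero)=\log(1+\po/\pn)$, and by the bijection argument $I(\cinzero;\coutzero)$ equals the same value, which matches the converse and establishes \eqref{eq:AWGN}.

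I expect no serious obstacle: the only step needing an explicit line of justification is the rotational-invariance claim — that a circularly symmetric Gaussian stays circularly symmetric Gaussian under the data-dependent phase rotation $e^{j\nlprm|\cinzero|^2}$ — which follows immediately from the independence of modulus and phase together with the fact that a uniform phase remains uniform after adding any offset that is a function of the modulus. Everything else is the textbook AWGN capacity computation.
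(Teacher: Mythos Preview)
Your proposal is correct and follows essentially the same approach as the paper: the converse is the maximum differential entropy lemma applied to $\coutzero$ under the second-moment bound $\expe[|\coutzero|^2]\le\po+\pn$, and achievability is via $\cinzero\distas\mathcal{CN}(0,\po)$. You simply make explicit the two ingredients the paper leaves implicit---that the map $\cinzero\mapsto\cinzero e^{j\nlprm|\cinzero|^2}$ is a modulus-preserving bijection (so the power constraint and mutual information transfer to the auxiliary variable $\mathbf{x}'$), and that a circularly symmetric Gaussian is invariant under this data-dependent phase rotation---which is a welcome clarification but not a different route.
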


\subsection{Capacity Analysis of the \chtwo{}}
A novel upper bound on the capacity of the \chtwo{} in \eqref{cd:pdf}
 is presented in the following theorem \cite{keykhosravi_ecoc_17}.
\begin{theorem}\label{thm:ub:nl:ssfm}\normalfont
The capacity of the \chtwo{} in \eqref{cd:pdf} is upper-bounded by 
\begin{IEEEeqnarray}{rCl}\label{eq:thm:ub:nl:ssfm}
	 \ctwo &\leq& \ubtw (P)\\
	 &=&\min_{\lambda>0,\ \alpha>0} \bigg\{\alpha\log\rlpo{\frac{\po+\pn}{\alpha}}+\log\lefto(\pi\Gamma(\alpha)\right)+\lambda+\max_{r_0>0}\{g_{\lambda,\alpha}(r_0, P)\}\bigg\}
\end{IEEEeqnarray}
where  $\Gamma(\cdot)$ denotes the Gamma function and
\begin{align}\label{gdef}
	g_{\lambda,\alpha}(r_0, P)&=({\alpha\log e}-\lambda)\frac{ r_0^2+\pn}{P+\pn}
	+(1-2\alpha)\expe\big[\log(\mf r)\mid \mf r_0=r_0\big]\nonumber\\
	&\qquad-h\lefto(\mf r\mid \mf r_0=r_0\right) -h(\bm \theta\mid \mf r, \mf r_0=r_0, \bm \theta_0=0).
\end{align}
The upper bound $\ubtw (P)$ can be calculated numerically using the expression for the conditional pdf $f_{\mf r,\bm \theta\mid \mf r_0, \bm \theta_0}(r,\theta|r_0,\theta_0)$ given in \eqref{cd:pdf}.
\end{theorem}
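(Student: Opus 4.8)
The plan is to obtain \eqref{eq:thm:ub:nl:ssfm} from a duality (auxiliary output distribution) bound on mutual information, combined with a Lagrangian treatment of the power constraint \eqref{eq:power:constraint}. For any probability density $q$ on $\CC$ and any input law $P_{\xx}$ meeting \eqref{eq:power:constraint}, one has $I(\xx;\yy)\leq -h(\yy\mid\xx)-\expe\lefto[\log q(\yy)\right]$, the gap being the nonnegative relative entropy between the true output distribution and $q$. I would choose $q$ to be circularly symmetric with $|\yy|^2$ Gamma-distributed of mean $\po+\pn$, i.e.\ $q(y)\propto |y|^{2\alpha-2}\exp\lefto(-\alpha|y|^2/(\po+\pn)\right)$ with shape parameter $\alpha>0$. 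Its normalizing constant contributes $\log\lefto(\pi\Gamma(\alpha)\right)+\alpha\log\lefto((\po+\pn)/\alpha\right)$, while the factor $|y|^{2\alpha-2}$ and the exponent contribute, respectively, a $(2-2\alpha)\expe[\log\mf r]$ term and an $\alpha\log e\,\expe[|\yy|^2]/(\po+\pn)$ term to $-\expe[\log q(\yy)]$ (the base-$2$ logarithm of the exponential is where the $\log e$ factor of $g_{\lambda,\alpha}$ originates).

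Second, I would use the circular symmetry of the \chtwo{} channel \eqref{cd:pdf}, whose transition density depends on $\theta-\theta_0$ only. Writing $\yy=\mf r e^{j\bm\theta}$ and $\xx=\mf r_0 e^{j\bm\theta_0}$ and passing to polar coordinates gives $h(\yy\mid\xx)=\expe_{\mf r_0}\lefto[h(\mf r\mid\mf r_0)+h(\bm\theta\mid\mf r,\mf r_0,\bm\theta_0=0)\right]+\expe[\log\mf r]$, where the last (Jacobian) term becomes $-\expe[\log\mf r]$ inside $-h(\yy\mid\xx)$ and combines with the $(2-2\alpha)\expe[\log\mf r]$ above to leave exactly $(1-2\alpha)\expe[\log\mf r]$; symmetry is also what permits replacing conditioning on $\xx$ by conditioning on $\mf r_0$ with $\bm\theta_0$ set to $0$, which is how the terms $-h(\mf r\mid\mf r_0=r_0)$ and $-h(\bm\theta\mid\mf r,\mf r_0=r_0,\bm\theta_0=0)$ of $g_{\lambda,\alpha}$ appear. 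From the Rician form of $f_{\mf r\mid\mf r_0}$ recorded below \eqref{cd:pdf} one reads off $\expe[\mf r^2\mid\mf r_0=r_0]=r_0^2+\pn$, hence $\expe[|\yy|^2]=\expe[\mf r_0^2]+\pn$; using the tower rule to express $\expe[\log\mf r]$ and $\expe[|\yy|^2]$ as $\expe_{\mf r_0}$ of conditional quantities, the accumulated bound becomes $I(\xx;\yy)\leq\alpha\log\lefto((\po+\pn)/\alpha\right)+\log\lefto(\pi\Gamma(\alpha)\right)+\expe_{\mf r_0}\lefto[g_{0,\alpha}(\mf r_0,\po)\right]$.

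Third, to make this bound useful (the right side can otherwise be $+\infty$), I would add, for any $\lambda>0$, the quantity $\lambda\bigl(1-\expe_{\mf r_0}[(\mf r_0^2+\pn)/(\po+\pn)]\bigr)$, which is nonnegative precisely because \eqref{eq:power:constraint} is equivalent to $\expe[\mf r_0^2]\leq\po$. This changes the coefficient of $(\mf r_0^2+\pn)/(\po+\pn)$ inside the expectation from $\alpha\log e$ to $\alpha\log e-\lambda$ and adds an additive $\lambda$, turning the bracket into exactly $g_{\lambda,\alpha}(\mf r_0,\po)$. Upper-bounding $\expe_{\mf r_0}[g_{\lambda,\alpha}(\mf r_0,\po)]\leq\max_{r_0>0}g_{\lambda,\alpha}(r_0,\po)$, then taking the supremum over admissible $P_{\xx}$ and the infimum over $\lambda>0$ and $\alpha>0$, gives \eqref{eq:thm:ub:nl:ssfm}.

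The main obstacle is not conceptual but lies in the bookkeeping and the regularity conditions: one must check that $-h(\yy\mid\xx)-\expe[\log q(\yy)]$ collapses to precisely the claimed combination of terms (keeping careful track of base-$2$ versus natural logarithms), that $\yy$ admits a density, that $h(\yy\mid\xx)$ and the three $r_0$-dependent quantities $\expe[\log\mf r\mid\mf r_0=r_0]$, $h(\mf r\mid\mf r_0=r_0)$, $h(\bm\theta\mid\mf r,\mf r_0=r_0,\bm\theta_0=0)$ are finite and measurable in $r_0$, and that the passage from $\expe_{\mf r_0}[\,\cdot\,]$ to $\max_{r_0}[\,\cdot\,]$ is legitimate (finiteness of $\ctwo$ itself already follows from the $\log(1+\mathrm{SNR})$ bound of \cite{Kramer_2015_itw}). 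Since $g_{\lambda,\alpha}(r_0,\po)$ has no closed form, the actual evaluation of $\ubtw(\po)$ — the inner $\max$ over $r_0$ and the outer $\min$ over $(\lambda,\alpha)$ — is carried out numerically from \eqref{cd:pdf}, as the statement already notes.
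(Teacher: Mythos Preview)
Your proposal is correct and matches the paper's proof: both use the duality (auxiliary-output-distribution) upper bound with a circularly symmetric auxiliary law whose squared amplitude is Gamma-distributed with mean $\po+\pn$, combined with a Lagrange multiplier $\lambda$ for the power constraint and the final $\expe_{\mf r_0}[\,\cdot\,]\leq\max_{r_0}[\,\cdot\,]$ relaxation. The only cosmetic difference is that the paper first passes to polar coordinates, bounds $h(\bm\theta\mid\mf r)\leq\log(2\pi)$, and applies duality only to $h(\mf r^2)$, whereas you apply duality to $h(\yy)$ in one step and then unpack in polar coordinates---since a circularly symmetric auxiliary is precisely ``uniform phase $\times$ Gamma amplitude-squared,'' the two routes produce identical terms.
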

\begin{proof}
See Appendix~\ref{app:proof:ub:nl:ssfm}.
\end{proof}
\begin{table}[!t]
	\renewcommand{\arraystretch}{1.3}
	\caption{Channel parameters.}
	\label{table1}
	\centering
	\begin{tabular}{c c c }
		\hline
		\hline
		Parameter&Symbol& Value\\
		\hline
		Attenuation &$\alpha$ & $0.2 \ \mathrm{dB/km}$\\
		Nonlinearity& $\gamma$ & $1.27 \ \rlp{\mathrm{W \cdot km}}^{-1}$\\
		Fiber length &$L$ & $5000 \ \mathrm{km}$\\
		Maximum bandwidth&$W_N$&$125 \ \mathrm{GHz}$\\
		Emission factor&$n_{\mathrm{sp}}$&$1$\\
		Photon energy &$h\nu$ & $1.28\cdot10^{-19}J$\\ 
		Noise variance&$\pn$&$-21.3\ \mathrm{dBm}$\\
		\hline
		\hline
	\end{tabular}
\end{table}

\section{Numerical Examples}\label{sec:num:example}
In Fig.~\ref{fig_plot_bounds}, we evaluate  the bounds derived in Section~\ref{s3} for a fiber-optical channel whose parameters are listed in Table~\ref{table1}.\footnote{ The channel parameters are the same as in \cite[Table~I]{yousefi_2011_tinf}.}
 Using \eqref{eq:nlprm}, we obtain $\nlprm=6350 \ \text{W}^{-1}$.

As can be seen from Fig.~\ref{fig_plot_bounds}, the capacity of the \chone{} is tightly bounded between the upper bound $\ublp(P)$ in \eqref{eq:thm:ub:x3:hp} and the lower bound $\lbone(P)$ in \eqref{eq:ch1:lb}. Furthermore, one can observe that although the alternative upper bound $\ubhp(P)$ in \eqref{eq:thm:ub:x3:hp2} is loose at low powers, it becomes tight in the moderate- and high-power regimes. 

\begin{figure}[t]
	\centering
	\includegraphics{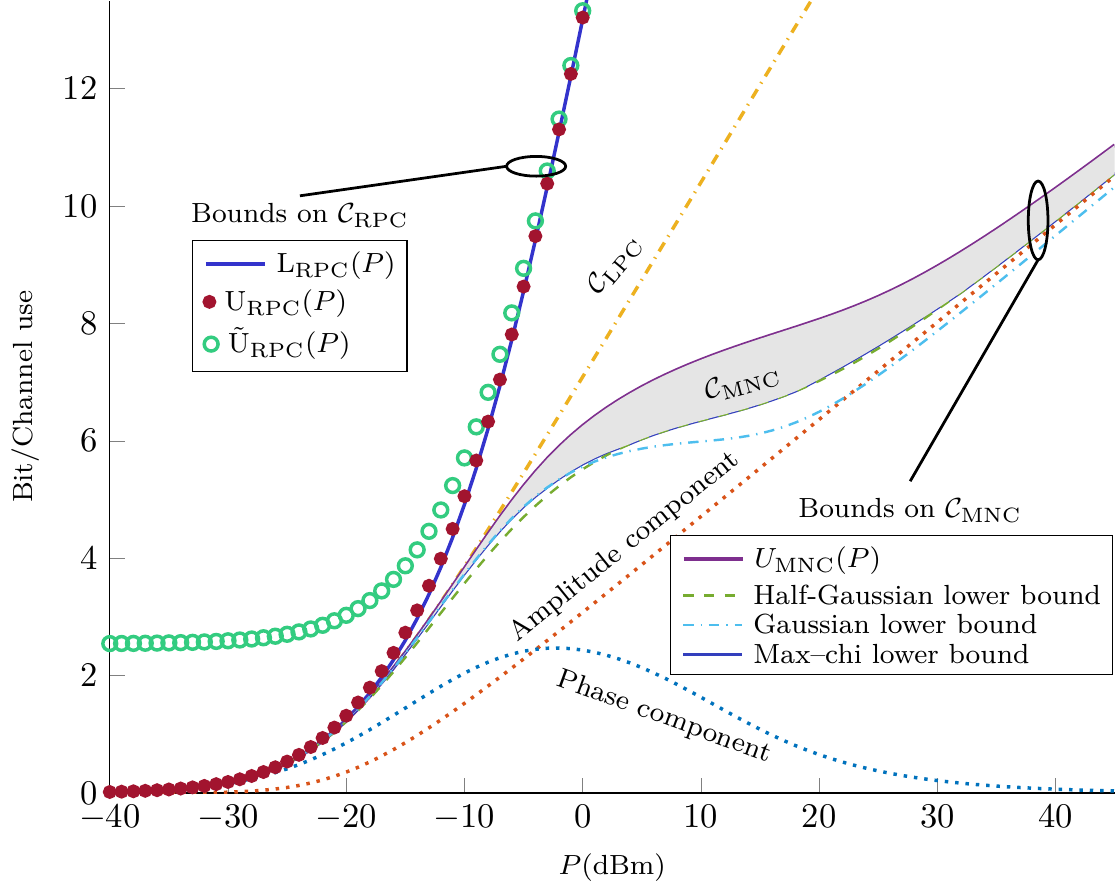}
	\caption{  Capacity bounds for the \chone{} in \eqref{eq:channel1} and the \chtwo{} in \eqref{cd:pdf}, together with the capacity of the \chthree{} in \eqref{eq:channel:one:step}. {The amplitude and the phase components of the  half-Gaussian lower bound for the \chtwo{} are also plotted.}
	} 
	\label{fig_plot_bounds}
\end{figure}

 We also plot the upper bound $\ubtw(P)$ on the capacity of the \chtwo{}.
  It can be seen that $\ubtw(P)$ improves substantially on the  upper bound given in \cite{Kramer_2015_itw}, i.e., the capacity of the corresponding AWGN channel \eqref{eq:AWGN} (which coincides with $\cthree{}$). 
{ As a lower bound on the \chtwo{} capacity, we propose the mutual information in \eqref{capacity} with an input $\mathbf x$ with uniform phase and amplitude $\mathbf r_0$ following a chi distribution with $k$ degrees of freedom. Specifically, we set
	\begin{equation}\label{eq:hg}
	f_{\mathbf r_0}(r_0)=\frac{2r_0^{k-1}}{\Gamma(k/2)}\rlp{\frac{k}{2P}}^{k/2}\!\!\!\!\exp\rlpo{\frac{-kr_0^2}{2P}}    
	\end{equation}
	 where $\Gamma(\cdot)$ denotes the gamma function.
	 The parameter $k$ is optimized for each power\footnote{Due to the computational complexity, we only considered $k$ values from $0.5$ to $2.5$ in steps of $0.5$.}. We calculated the bound numerically and include it in Fig.~\ref{fig_plot_bounds} (referred to as max--chi lower bound). We also include two lower bounds corresponding to $k=1$ (with half-Gaussian amplitude distribution, first presented in \cite{yousefi_2011_tinf}) and $k=2$ (with Rayleigh-distributed amplitude, or equivalently, a complex Gaussian input $\mathbf x$, first presented in \cite{keykhosravi_ecoc_17}).
The max--chi lower bound coincides with these two lower bounds  at asymptotically low and high power, and improves slightly thereon at intermediate powers (around $0$ dBm), similarly to the numerical bound in \cite{ShenLi_ecoc_18}. Specifically, at asymptotically low powers, $k = 2$ (Gaussian lower bound) is optimal. This is expected, since the channel is essentially linear at low powers. At high powers, on the other hand, the optimal $k$ value approaches $1$ (half-Gaussian lower bound), which is consistent with \cite{yousefi_2011_tinf},
where it has been shown that half-Gaussian amplitude distribution is capacity-achieving for the MNC in the high-power regime. Finally, we observed that  $k=0.5$ is optimal in the power range $18\leq P\leq 32$ dBm.  }


Fig.~\ref{fig_plot_bounds} suggests that $\ctwo{}$ experiences  changes in  slope at about $0$ and $30$ dBm {(corresponding to the inflection points at about $-10$ dBm and $20$ dBm)}.
{ To explain this behavior, we evaluate the phase and the amplitude components of the half-Gaussian lower bound. Specifically, we split the mutual information  into two parts as
	\begin{align}
I(\mf x ; \mf y)&=	I(\mf r_0,\bm \theta_0 ; \mf r, \bm \theta )\\
&=I(\mf r_0,\bm \theta_0;\mf r)+I(\mf r_0 ,\bm \theta_0;\bm \theta\mid \mf r).\label{split}
	\end{align}
	The first term in \eqref{split} is the amplitude component and the second term is the phase component of the mutual information. These two components are evaluated for the half-Gaussian amplitude distribution  and plotted in Fig.~\ref{fig_plot_bounds}. It can be seen from Fig.~\ref{fig_plot_bounds} that the  amplitude component is monotonically increasing with power while the phase component goes to zero with power after reaching a maximum. Indeed, 	  at high powers the phase of the received signal becomes uniformly distributed over $[0 ,  2\pi]$  and independent of the transmitted signal \cite[Lem.~5]{yousefi2016asymptotic}.   	By adding these two components one obtains a capacity lower bound  that changes concavity at two points. }

As a final observation, we note  that  $\cone$ diverges from $\ctwo$ at about $-15$ dBm, whereas  $\cthree{}$ diverges from $\ctwo{}$ at about $-5$ dBm. Since the \chtwo{}  describes the nondispersive NLS channel {more accurately than the other two channels,} this result suggests that the perturbative models are grossly inaccurate in the high-power regime.

\section{Discussion and Conclusion}\label{sec:conc}
The capacity of three   { single-channel} optical  models, namely, the \chone{}, the \chthree{}, and the \chtwo{} were investigated. 
{ All three models are developed under two simplifying assumptions: channel memory is ignored and a sampling receiver is applied.
   Furthermore, two of these models, i.e., the \chone{} and the \chthree, are based on  perturbation theory and ignore signal--noise interaction,  which makes them accurate only in the low-power regime. 
   By tightly bounding the capacity of the \chone{}, by characterizing the capacity of the \chthree, and by developing a tight upper bound on the capacity of the \chtwo, we showed that the capacity of these models, for the same underlying physical channel, behave very differently at high powers. Since the \chtwo{} is a more accurate channel model than the other two, one may conclude that the perturbative models become grossly inaccurate at high powers in terms of capacity calculation. 

   	Note that the \chthree{} model can be obtained from the \chtwo{} by neglecting the signal--noise interaction.
   	Comparing the capacity of these two channels allows us to conclude that the impact of neglecting the signal--noise interaction on capacity is significant.
   	Observe also that the capacity of the \chthree{} model grows quickly with power, because of the large capacity pre-log. Such a behavior is caused by the additive model used for the nonlinear distortion, which causes an artificial power increase at high SNR.
   	A more accurate model than the \chone{} may be obtained by performing a normalization that conserves the signal power.  Future work should consider more realistic channel models with nonzero dispersion and with  more practical receivers.  Zero-dispersion models do not represent the physical fiber-optical channel and the analytical results based on these models should serve only as a first step towards the study of more physically-relevant channels.       }

\appendices

\section{Proof of Theorem~\ref{thm:lower_x3}}\label{app:prf:tm:ch1:lb}

The capacity of the regular perturbative  channel can be written as
\begin{IEEEeqnarray}{s?l}\label{eq:shannon}
\ &\cone= \sup I\rlpo{\cinone; \coutone}
\end{IEEEeqnarray}
where the supremum is  over all the probability distributions on $\mf x$ that satisfy the power constraint \eqref{eq:power:constraint}.
Let 
\begin{equation}
\wrnd=\cinone+j\nlprm|\cinone|^2\cinone.\label{mosalas}
\end{equation} 
 We have that
 \begin{align}
 	I\rlpo{\cinone;\coutone}&=h\rlpo{\coutone}-h\rlpo{\coutone\given\cinone}\label{44}\\
 	&=h\rlpo{\wrnd+\noise}-h\rlpo{\wrnd+\noise\given\cinone}\\
 	&=h\rlpo{\wrnd+\noise}-h\rlpo{\noise}.\label{eq:wbased:mi}
 \end{align}
 Using the entropy power inequality \cite[Sec. 2.2]{elgamal_2011_net} and the Gaussian entropy formula \cite[Th.~8.4.1]{cover_information2}, we conclude that
 \begin{align}
 	h\rlpo{\wrnd+\noise} &\geq \log\rlpo{2^{h\rlpo{\wrnd}}+2^{h\rlpo{\noise}}}\\
 	&=\log\rlpo{2^{h\rlpo{\wrnd}}+\pi e \pn}.\label{eq:wpn}
 \end{align}
 Substituting \eqref{eq:wpn} into \eqref{eq:wbased:mi}, and using again the Gaussian entropy formula \cite[Th.~8.4.1]{cover_information2}, we obtain
 \begin{align}
 	I\rlpo{\cinone;\coutone}\geq\log\rlpo{2^{h\rlpo{\wrnd}}+\pi e \pn}-\log\rlpo{\pi e \pn}\label{eq:i}.
 \end{align}
We take $\cinone$  circularly symmetric. It follows from \eqref{mosalas} that $\wrnd$ is also circularly symmetric. Using \cite[Eq.~(320)]{lapidoth_2003_tinf} to compute $h(\wrnd)$, we obtain   
\begin{align}
h\rlpo{\wrnd}= h\rlpo{|\wrnd|^2}+\log{\pi}.\label{eq:hw}
\end{align}   
 Substituting \eqref{eq:hw} into \eqref{eq:i}, we get
 \begin{align}\label{eq:mi:exp:hu}
 	I\rlpo{\cinone;\coutone}\geq\log\rlpo{\frac{2^{h\rlpo{|\wrnd|^2}}}{ e \pn}+1}.
 \end{align}

Next to evaluate the  right-hand side (RHS) of \eqref{eq:mi:exp:hu}, we choose the following distribution for the amplitude square   $\srnd=|\cinone|^2$ of $\cinone$:
\begin{equation}
	f_{\srnd}(s)=\alfprm\rlp{3\nlprm^2s^2+1}e^{-\lmdprm s},\ \ \ \ s\geq 0.\label{eq:sdist}
\end{equation} 
 The parameters  $\lmdprm>0$ and $\alfprm>0$ are chosen so that \eqref{eq:sdist} is a pdf and so that the power constraint \eqref{eq:power:constraint}  is satisfied. We prove in Appendix~\ref{sec:alf:and:lmd} that by choosing these two parameters so that
 \begin{IEEEeqnarray}{c}\label{eq:alf:tm}
 \alfprm =\frac{\lmdprm^3}{\lmdprm^2+6\nlprm^2}
 \end{IEEEeqnarray}
and so that \eqref{eq:lmd:tm} holds, both constraints are met.
In Appendix~\ref{sec:hu}, we then prove that
\begin{align}\label{eq:hu}
h\rlpo{|\wrnd|^2}=-\log{\alfprm}+\alfprm\rlp{\frac{1}{\lmdprm}+\frac{18\nlprm^2}{\lmdprm^3}}\log e.
\end{align}
Substituting \eqref{eq:hu} and \eqref{eq:alf:tm} into \eqref{eq:mi:exp:hu}, we obtain \eqref{eq:ch1:lb}.
 Although not necessary for the proof, in Appendix~\ref{sec:optdist}, we justify  the choice of the pdf in \eqref{eq:sdist} by showing that it  maximizes $h(\wrnd)$.

\subsection{Choosing $\alfprm$ and $\lmdprm$}\label{sec:alf:and:lmd}
 We choose the coefficients $\alfprm$ and $\lmdprm$ so that \eqref{eq:sdist} is a valid pdf and $\expe\rlbo{\srnd}\leq P$.  Note that
\begin{align}
\int\limits_{0}^{\infty}f_{\srnd}(s)\di s&=\int\limits_{0}^{\infty}\alfprm\rlp{3\nlprm^2s^2+1}e^{-\lmdprm s}\di s\\
&=\alfprm\frac{\lmdprm^2+6\nlprm^2}{\lmdprm^3}.\label{eq:pwr:cnst}
\end{align}
Therefore, choosing $\alfprm$ according to \eqref{eq:alf:tm}  guarantees that $f_{\srnd}\rlpo{s}$ integrates to $1$. We next compute $\expe\rlbo{\srnd}:$
\begin{align}
\expe\rlbo{\srnd}&=\int\limits_{0}^{\infty}sf_{\srnd}(s)\di s\\
&=\int\limits_{0}^{\infty}s\alfprm\rlp{3\nlprm^2s^2+1}e^{-\lmdprm s}\di s\\
&=\alfprm\rlp{\frac{18\nlprm^2}{\lmdprm^4}+\frac{1}{\lmdprm^2}}.\label{eq:pwchkr}
\end{align}
 Substituting \eqref{eq:alf:tm} into \eqref{eq:pwchkr}, we obtain
 \begin{align}
\expe\rlbo{s}&=\frac{18\nlprm^2+\lmdprm^2}{\lmdprm\rlp{\lmdprm^2+6\nlprm^2}}.\label{eq:pwchkr2}
 \end{align}
We  see now that imposing $\expe\rlbo{\srnd}\leq\po$ is equivalent to  \eqref{eq:powr:cons}. Observe that the RHS of \eqref{eq:pwchkr2} and the objective function on the RHS of \eqref{eq:ch1:lb} are  decreasing functions of $\lmdprm$. Therefore,  setting the RHS of \eqref{eq:pwchkr2} equal to $P$, which yields  \eqref{eq:lmd:tm}, maximizes the objective function in \eqref{eq:ch1:lb}. 
 
 Finally, we prove that \eqref{eq:lmd:tm} has a single positive root. We have
 \begin{IEEEeqnarray}{rCl}
 	 	f(\lmdprm)&=&\po \lmdprm^3-\lmdprm^2 +6\po\nlprm^2\lmdprm-18\nlprm^2\label{59}\\
 	 	&=&\rlp{\lmdprm^2+6\nlprm^2}\rlp{\po\lmdprm-1}-12\nlprm^2. \label{eq:prm:2}
 \end{IEEEeqnarray}
Note that $f(\lmdprm)\to \infty$ as $\lmdprm\to \infty$ and that the RHS of \eqref{eq:prm:2} is negative when $\lmdprm<1/P$. Furthermore, $f(\lmdprm)$ is monotonically increasing in the interval $[1/P,\infty)$. Indeed, when ${\lmdprm\geq 1/\po}$,
 \begin{IEEEeqnarray}{rCl}
 	\frac{\di}{\di \lmdprm}f(\lmdprm)&=&3\po \lmdprm^2-2\lmdprm +6\po\nlprm^2\\
 	&\geq& 3 \lmdprm -2\lmdprm +6\po\nlprm^2\label{eq:lprm:2}\\
 	&>&0.
 	\end{IEEEeqnarray}
This yields the desired result.

\subsection{Proof of \eqref{eq:hu}}\label{sec:hu}
To compute the differential entropy of $\urnd=|\wrnd|^2$, we first determine the pdf of $\urnd$. By definition,
\be\label{eq:u:and:s}
\urnd=\srnd+\nlprm^2\srnd^3.
\ee
Let now $\gfun\rlpo{x}=x+\nlprm^2x^3$. Since 
\begin{align}
\frac{\di}{\di x}\gfun\rlpo{x}&=1+3\nlprm^2x^2
\end{align}
we conclude that $\gfun\rlpo{x}$  is monotonically increasing for $x\geq0$. Hence, $g(x)$ is one-to-one when $x\geq 0$ and its inverse 
\begin{equation}
\qfun\rlpo{x}=\gfun^{-1}\rlpo{x}\label{eq:qfun:def}
\end{equation}
 is well defined. Thus, the pdf of $\urnd$ is given by \cite[Ch.~5]{papolis_PrSt}
\begin{align}
f_{\urnd}(t)&=\frac{f_{\srnd}\rlpo{\qfun\rlpo{t}}}{\gfun'\rlpo{\qfun\rlpo{t}}}\label{eq:upb:hx2}\\
&=\frac{\alfprm\rlpo{3\nlprm^2\qfun^2\rlpo{t}+1}e^{-\lmdprm\qfun\rlpo{t}}}{3\nlprm^2\qfun^2\rlpo{t}+1}\label{71}\\
&=\alfprm e^{-\lmdprm q(t)},\ \ \ t\geq 0.\label{72}
\end{align}
Here, \eqref{71} holds because of \eqref{eq:sdist}.
Using \eqref{72}, we can now compute $h\rlpo{\urnd}$ as
\begin{IEEEeqnarray}{r?rCl}
&h\rlpo{\urnd}&=&-\int\limits_{0}^{\infty} f_{\urnd}(t)\log\rlpo{f_{\urnd}(t)}\di t\\
&&=&-\log {\alfprm}+\lmdprm\alfprm \rlp{\log e} \int\limits_{0}^{\infty}\qfun\rlpo{t}e^{-\lmdprm\qfun\rlpo{t}}\di t \\
& &=&-\log {\alfprm}+\lmdprm\alfprm\rlp{\log e}{\int\limits_{0}^{\infty} re^{-\lmdprm r}\rlp{1+3\nlprm^2r^2}\di r }\label{68} \\
&&=&-\log {\alfprm}+\lmdprm\alfprm\rlp{\frac{1}{\lmdprm^2}+\frac{18\nlprm^2}{\lmdprm^4}}\log e
\end{IEEEeqnarray}
where in \eqref{68} we used the change of variables $r=\qfun\rlpo{t}$. This proves \eqref{eq:hu}.
\subsection{$f_{\srnd}(s)$ maximizes $h(\wrnd)$}\label{sec:optdist}
We shall prove  that the pdf $f_{\srnd}(s)=\alfprm\rlp{3\nlprm^2s^2+1}e^{-\lmdprm s},\ \ s\geq 0$, maximizes $h\rlpo{\wrnd}$. It follows  from \eqref{eq:hw} that to maximize $h\rlpo{\wrnd}$, we need to maximize $h\rlpo{\urnd}$. We assume that the power constraint is fulfilled with equality, i.e., that
\begin{IEEEeqnarray}{c}
	\expe\rlbo{\srnd}=\int_{0}^{\infty}sf_\srnd(s)\di s =P.
\end{IEEEeqnarray}
Using the change of variables $s=\qfun(t)$, where $q(t)$ was defined in \eqref{eq:qfun:def}, we obtain
\begin{IEEEeqnarray}{rCl}\label{eq:pw:t}
	\int_{0}^{\infty}\qfun(t)f_\srnd(\qfun(t))\qfun'(t)\di t =P.
\end{IEEEeqnarray}
Substituting \eqref{eq:upb:hx2} into \eqref{eq:pw:t} and using that $\qfun'(t)=1/\gfun'(q(t))$, we obtain
\begin{IEEEeqnarray}{rCl}\label{eq:pw:tt}
	\int_{0}^{\infty}\qfun(t)f_\urnd(t)\di t =P.
\end{IEEEeqnarray}
It follows now from \cite[Th.~12.1.1]{cover_information2}  that the pdf that maximizes $h(\urnd)$ is of the form $f_\urnd(t)=e^{\lambda_0+\lambda_1\qfun(t)},\ \ t\geq0$, where $\lambda_0$ and $\lambda_1$ need to be chosen so that \eqref{eq:pw:tt} is satisfied and $f_\urnd(t)$ integrates to one. Using \eqref{eq:upb:hx2}, we get
\begin{IEEEeqnarray}{rCl}
	f_\srnd(s)&=&f_\urnd(\gfun(s))\gfun'(s)\\
	&=& \rlpo{1+3\nlprm^2s^2}e^{\lambda_0+\lambda_1s},\ \ \ s\geq 0.
\end{IEEEeqnarray}
 By setting $\alfprm=e^{\lambda_0}$ and $\lambda=\lambda_1$, we obtain \eqref{eq:sdist}.
  
 \section{Proof of Theorem~\ref{thm:ub:x3}}\label{app:proof:thm:ub:x3}
Fix $\lambda\geq 0$. It follows from \eqref{capacity} and \eqref{eq:power:constraint} that
 \begin{align}\label{eqd1}
 	\ctwo(P)&\leq \sup\rlao{I(\cinone ; \coutone )+\lambda\rlp{1-\frac{\expe\lefto[|\cinone|^2\right]+\pn}{P+\pn}}}
 \end{align}
 where the supremum is over the set of  probability distributions that satisfy the power constraint \eqref{eq:power:constraint}. 
Next, we upper-bound the mutual information $I\rlpo{\cinone;\coutone}$ as 
\begin{align}
I\rlpo{\cinone;\coutone}&= h\rlpo{\coutone}-h\rlpo{\coutone\mid \cinone}\\
&= h\rlpo{\coutone}-h\rlpo{\noise}\\
&= h\rlpo{|\coutone|}+h\rlpo{\phase{\,\coutone}\mid |\coutone|}+\expe\rlpo{\log |\coutone|}-h\rlpo{\noise}\label{76}\\
&= h\rlpo{|\coutone|^2}+h\rlpo{\phase{\,\coutone}\mid |\coutone|}-\log 2-h\rlpo{\noise}\label{77}\\
&\leq h\rlpo{|\coutone|^2}+\log\rlp{\pi}-h\rlpo{\noise}\label{78}\\
&=h\rlpo{|\coutone|^2}-\log\rlpo{e\pn}\label{eq:ub:Ibyhv}
\end{align}
where in \eqref{76} we used  \cite[Lemma~6.16]{lapidoth_2003_tinf} and in \eqref{77} we used  \cite[Lemma~6.15]{lapidoth_2003_tinf}.
We fix now an arbitrary input pdf $f_\cinone(\cdot)$ that satisfies the power constraint and define the random variables $\vrnd=|\coutone|^2$ and $\wrnd=\cinone+j\nlprm|\cinone|^2\cinone$. 
 Next, we shall obtain an upper bound on  $h\rlpo{\vrnd}$ that is valid for all  $f_\cinone(\cdot)$. 
Let
\be\label{eq:distv}
\tilde{f}_{\vrnd}(v)=\betprm e^{-\gmmprm\qfun\rlpo{v}},\ \ \ v\geq 0
\ee 
for some parameters $\betprm>0$ and $\gmmprm>0$. The function $\qfun(\cdot)$ is defined in \eqref{eq:qfun:def}. 
We next choose~$\betprm$ so that $\tilde{f}_{\vrnd}(v)$ is a valid pdf. To do so,   we set $z=\qfun\rlpo{v}$, which implies that $\gfun(z)=v$, and that
\be
\rlp{1+3\nlprm^2z^2}\di z=\di v.
\ee
Therefore, integrating $\tilde{f}(v)$ in \eqref{eq:distv}, we obtain
\begin{align}
	\int\limits_{0}^{\infty}\betprm e^{-\gmmprm\qfun\rlpo{v}}\di v&=\betprm\int\limits_{0}^{\infty} e^{-\gmmprm z}\rlp{1+3\nlprm^2z^2}\di z \\
	&=\betprm\rlp{\frac{\gmmprm^2+6\nlprm^2}{\gmmprm^3}}.\label{eq:betprm:proof}
\end{align}
We see from \eqref{eq:betprm:proof} that the choice
\begin{IEEEeqnarray}{c}
\betprm=\frac{\gmmprm^3}{\gmmprm^2+6\nlprm^2}\label{eq:betprm:def}
\end{IEEEeqnarray}
  makes $\tilde{f}_{\vrnd}(v)$  a valid pdf.
 Using the definition of the relative entropy, we have 
%
\begin{align}\label{85}
D\rlpo{f_{\mf v}(v) \, |\!|\, \tilde f_{\mf v}(v)}&=\int_{-\infty}^{+\infty}f_{\mf v}(v)\log\rlpo{\frac{f_{\mf v}(v)}{\tilde f_{\mf v}(v)}} \di v\\
&=-h(\mf v)-\expe\rlbo{\log\rlpo{\tilde f_{\mf v}(v)}}.
\end{align}
Since the relative entropy is nonnegative \cite[Thm.~8.6.1]{cover_information2}, we obtain
\begin{align}\label{87}
	h\rlpo{\mf v}&\leq -\expe_{\mf v}\rlpo{\log\rlpo{\tilde{f}_{\mf v}(\mf v)}}\\
	&=-\log{\betprm}+\gmmprm \expe\rlbo{\qfun\rlpo{\vrnd}}\log e. \label{e2}
\end{align}

Substituting \eqref{eq:ub:Ibyhv} and \eqref{e2} into \eqref{eqd1}, we obtain
\begin{align}
 	\ctwo(P)&\leq -\log\rlpo{e\pn}-\log{\betprm}+\lambda+ \sup\rlao{\gmmprm \expe\rlbo{\qfun\rlpo{\vrnd}}\log e-\lambda\frac{\expe\lefto[|\cinone|^2\right]+\pn}{P+\pn}}\\
 	&\leq -\log\rlpo{e\pn}-\log{\betprm}+\lambda+ \max_{s>0}\rlao{\gmmprm \expe\rlbo{\qfun\rlpo{\vrnd}\mid |\cinone|^2=s}\log e-\lambda\frac{s+\pn}{P+\pn}}.\label{le}
\end{align}
The final upper bound  \eqref{eq:thm:ub:x3:hp} is obtained  by minimizing   \eqref{le}  over all $\lambda\geq 0$ and $\gmmprm\geq 0$.

\section{Proof of Theorem \ref{thm:x3:ubn:lp}}\label{app:proof:ub:x3:lp}
It follows from \eqref{e2} and \eqref{eq:ub:Ibyhv} that
\begin{align}\label{III}
	I\rlpo{\cinone;\coutone}&\leq -\log\rlpo{\betprm}+ \gmmprm \expe\rlbo{\qfun\rlpo{\vrnd}}\log e-\log\rlpo{e\pn}
\end{align}
where $\vrnd=|\mf y|^2$. Moreover,
\begin{align}
\expe\rlbo{\qfun\rlpo{\vrnd}}
&= \expe\rlbo{\qfun\rlpo{|\wrnd+\noise|^2}} \\
&= \expe\rlbo{\qfun\rlpo{|\wrnd|^2+|\noise|^2+2\red\rlpo{\wrnd\noise^*}}}.\label{eq:hv:upb:Evq}
\end{align}
Next, we analyze the function $\qfun(x)$. We have
\begin{align}
\qfun'\rlpo{x}&=\frac{1}{\gfun'\rlpo{\qfun\rlpo{x}}}\\
&=\frac{1}{1+3\nlprm^2\qfun^2\rlpo{x}}.\label{eq:qfun:fstdriv}
\end{align}
Furthermore,
\begin{align}\label{eq:qfun:secdriv}
\qfun''\rlpo{x}&=-\frac{6\nlprm^2\qfun\rlpo{x}}{\rlpo{1+3\nlprm^2\qfun^2\rlpo{x}}^3}\leq 0.
\end{align}
 Therefore,  $\qfun\rlpo{x}$ is a nonnegative concave function on $[0,\infty)$. Thus, for every real numbers $x\geq 0$ and $y\geq -x$,
\begin{align}
\qfun\rlpo{x+y}&\leq \qfun\rlpo{x}+\qfun'(x)y\label{eq:qfun:upb:line}.
\end{align}
 Using \eqref{eq:qfun:upb:line} in \eqref{eq:hv:upb:Evq},  with $x=|\wrnd|^2+|\noise|^2$ and $y=2\red\rlpo{\wrnd\noise^*}$, we get
\begin{align}
	\expe\rlbo{\qfun\rlpo{\vrnd}}&\leq \expe\rlbo{\qfun\rlpo{|\wrnd|^2+|\noise|^2}+2\qfun'\rlpo{|\wrnd|^2+|\noise|^2}\red\rlpo{\wrnd\noise^*}}.\label{eq:102}
	\end{align}
	 Using \eqref{eq:qfun:upb:line} once more  with $x=|\wrnd|^2$ and $y=|\noise|^2$, we obtain
\begin{align}
\expe\rlbo{\qfun\rlpo{\vrnd}}&\leq \expe\rlbo{\qfun\rlpo{|\wrnd|^2}+\qfun'\rlpo{|\wrnd|^2}|\noise|^2+2\qfun'\rlpo{|\wrnd|^2+|\noise|^2}\red\rlpo{\wrnd\noise^*}}\\
&= {\expe\rlbo{\qfun\rlpo{|\wrnd|^2}}+\pn \expe\rlbo{\qfun'\rlpo{|\wrnd|^2}}+2 \expe\rlbo{\qfun'\rlpo{|\wrnd|^2+|\noise|^2}\red\rlpo{\wrnd\noise^*}}}.\label{eq:ubn:hv:expe}
\end{align}
 We shall now bound each  expectation in \eqref{eq:ubn:hv:expe} separately. Since $|\wrnd|^2=\gfun\rlpo{|\cinone|^2}$, we have that
 \begin{align}
 \expe\rlbo{\qfun\rlpo{|\wrnd|^2}}&= \expe\rlbo{|\cinone|^2}\\
 &\leq\po\label{eq:upb:1st}
 \end{align}
 where the last inequality follows from \eqref{eq:power:constraint}.
 It also follows from \eqref{eq:qfun:fstdriv} that
 \begin{align}
\qfun'(|\wrnd|^2)\leq 1.\label{eq:upb:2nd}
 \end{align}
Furthermore, \eqref{eq:qfun:fstdriv} and \eqref{eq:qfun:secdriv} imply that the function $\qfun'\rlpo{x}$ is positive and decreasing  in the interval $x\geq 0$. Therefore,
\begin{align}
 \expe\rlbo{\qfun'\rlpo{|\wrnd|^2+|\noise|^2}\red\rlpo{\wrnd\noise^*}}&\leq \expe\rlbo{\qfun'\rlpo{|\wrnd|^2+|\noise|^2}|\red({\wrnd\noise^*})|}\label{eq:110}\\
 &\leq \expe\rlbo{\qfun'\rlpo{|\wrnd|^2}|\red({\wrnd\noise^*})|}\\
  &\leq \expe\rlbo{\qfun'\rlpo{|\wrnd|^2}|\wrnd|\cdot|\noise|}\\
    &= \frac{\sqrt{\pi\pn}}{2}\expe\rlbo{\qfun'\rlpo{|\wrnd|^2}|\wrnd|}\label{eq:En}\\
        &\leq \frac{\sqrt{\pi\pn}}{2}\max\limits_{t\geq0}\rlao{t\qfun'\rlpo{t^2}}\\
&= \frac{\sqrt{\pi\pn}}{2}\max\limits_{t\geq0}\rlao{\frac{t}{1+3\nlprm^2\qfun^2\rlpo{t^2}}}\label{eq:upb:3rd1}
\end{align}
where \eqref{eq:En} holds because $\expe\rlbo{|\noise|}=\sqrt{\pi\pn/4}$  and the last equality follows from \eqref{eq:qfun:fstdriv}.
To calculate the maximum in \eqref{eq:upb:3rd1}, we use the change of variables $t^2=x+\nlprm^2x^3$ to obtain
\begin{IEEEeqnarray}{rCl}
	\max\limits_{t\geq0}\rlao{\frac{t}{1+3\nlprm^2\qfun^2\rlpo{t^2}}}&=& \max\limits_{x\geq0}\rlao{\frac{\sqrt{x+\nlprm^2x^3}}{1+3\nlprm^2x^2}}\label{eq:y1}\\
	&=&\frac{1}{12^{3/8}\sqrt{\rlp{\sqrt{3}-1}\nlprm}}\label{eq:obt:sol}
\end{IEEEeqnarray}
where the last step follows by some standard algebraic manipulations that involve finding the  roots of the derivative of the objective function on the RHS of \eqref{eq:y1}.
 Substituting \eqref{eq:obt:sol} into \eqref{eq:upb:3rd1}, we obtain
 \begin{align}
 	\expe\rlbo{\qfun'\rlpo{|\wrnd|^2+|\noise|^2}\red\rlpo{\wrnd\noise^*}}&\leq \frac{\sqrt{\pi\pn}}{2\times12^{3/8}\sqrt{\rlp{\sqrt{3}-1}\nlprm}}.\label{eq:upb:3rd}
 \end{align}

 Substituting \eqref{eq:upb:1st}, \eqref{eq:upb:2nd}, and \eqref{eq:upb:3rd} into \eqref{eq:ubn:hv:expe}, and the result into \eqref{III}, we obtain 
 \begin{align}\label{eq:ub:mi:hp}
 I\rlpo{\cinone;\coutone}& \leq-\log{\betprm}+\gmmprm\rlp{\log e}\rla{ \po+\pn +\frac{\sqrt{\pi\pn}}{12^{3/8}\sqrt{\rlp{\sqrt{3}-1}\nlprm}}}-\log(e\pn).
 \end{align}
Finally, we obtain \eqref{eq:thm:ub:x3:hp} by substituting \eqref{eq:betprm:def} into \eqref{eq:ub:mi:hp}. Since the upper bound \eqref{eq:ub:mi:hp} on mutual information holds for every input distribution that satisfies the power constraint, it is also an upper bound on  capacity for every $\gmmprm>0$.
To find the optimal $\gmmprm$, we need to minimize
\begin{align}
\log\rlpo{\frac{\gmmprm^2+6\nlprm^2}{\gmmprm^3}}+\gmmprm\rlp{\po+\cprm}\log e =\log\rlpo{\frac{\exp\rlpo{\gmmprm\rlpo{\po+\cprm}}\rlp{\gmmprm^2+6\nlprm^2}}{\gmmprm^3}}\label{eq:optgmm:log}
\end{align}
where $\cprm$ was defined in  \eqref{eq:gmmprm:def}. 
Observe now that the function inside logarithm on the RHS of \eqref{eq:optgmm:log} goes to infinity when $\gmmprm\to 0$ and when $\gmmprm\to \infty$. 
Therefore, since this function is positive, it must have a minimum in the interval $[0 , \infty)$. To find this minimum, we set its derivative equal to zero and get \eqref{eq:gmmprm:def}. Note finally that since \eqref{eq:lmd:tm} has exactly one real root, which was proved in Appendix~\ref{sec:alf:and:lmd},   \eqref{eq:gmmprm:def} also has exactly one real root.

 \section{Proof of Theorem \ref{thm:ub:nl:ssfm}}\label{app:proof:ub:nl:ssfm}

The proof uses  similar steps as in  \cite[Sec.~III-C]{durisi2012capacity}. 
We upper-bound the mutual information between the $\mf x$ and $\mf y$ expressed in  polar coordinates as 
\begin{align}\label{eq1}
I(\mf x ; \mf y)&=	I(\mf r_0,\bm \theta_0 ; \mf r, \bm \theta )\\
	&=I(\mf r_0,\bm \theta_0;\mf r)+I(\mf r_0 ,\bm \theta_0;\bm \theta\mid \mf r)\\
	&=h(\mf r)-h(\mf r\mid \mf r_0,\bm \theta_0)+h(\bm \theta\mid \mf r)-h(\bm \theta\mid \mf r,\mf r_0,\bm \theta_0)\\
	&\leq h(\mf r^2) - \expe[\log(\mf r)] - \log 2-h(\mf r\mid \mf r_0,\bm \theta_0)+\log(2\pi)-h(\bm \theta\mid \mf r,\mf r_0,\bm \theta_0).\label{156.5}
\end{align}
In \eqref{156.5} we used \cite[Eq.~(317)]{lapidoth_2003_tinf} and that $h(\bm \theta\mid \mf r)\leq \log(2\pi)$. Let now $\tilde f_{\mf r^2}(\cdot)$ denote an arbitrary pdf for $\mf r^2$. 
Following the same calculations as in \eqref{85}--\eqref{87}, we obtain
\begin{align}
	h(\mf r^2)\leq -\expe_{\mf r^2}\lefto[\log(\tilde f_{\mf r^2}(\mf r^2))\right].\label{157}
\end{align}
We shall take $\tilde f_{\mf r^2}(\cdot)$ to be a Gamma distribution with parameters $\alpha>0$ and  $\beta=(\po+\pn)/\alpha$, i.e.,
\begin{align}
	\tilde f_{\mf r^2}(z)=\frac{z^{\alpha-1}e^{-z/\beta}}{\beta^\alpha \Gamma(\alpha)},\qquad z\geq 0.\label{158}
\end{align}
Here, $\Gamma(\cdot)$ denotes the Gamma function.
 Substituting \eqref{158} into \eqref{157}, we obtain
\begin{align}
	&\expe[\log(\tilde f_{\mf r^2}(\mf r^2))]\\
	&=2(\alpha-1)\expe\lefto[\log(\mf r)\right]-{\alpha}\frac{\expe\lefto[\mf r_0^2\right]+\pn}{P+\pn}\log e-\alpha\log\rlpo{\frac{\po+\pn}{\alpha}}-\log(\Gamma(\alpha)).\label{161}
\end{align}
It follows from \eqref{cd:pdf} that the random variables $\mf r$ and $\bm \theta_0$ are conditionally independent given $\mf r_0$. Therefore, 
\begin{equation}
h(\mf r\mid \mf r_0,\bm \theta_0)=h(\mf r\mid \mf r_0).\label{162}
\end{equation}
 Next, we study the term $h(\bm \theta\mid \mf r,\mf r_0,\bm \theta_0)$ in \eqref{156.5}.
  From Bayes' theorem 
 and \eqref{cd:pdf} it follows that
  for every  $\theta'\in [0,2\pi)$
\begin{equation}
f_{\bm \theta\mid \mf r, \mf r_0, \bm \theta_0}(\theta|r, r_0,\theta_0)=f_{\bm \theta\mid \mf r, \mf r_0, \bm \theta_0}(\theta-\theta'|r, r_0,\theta_0-\theta').
\end{equation}
Therefore, 
\begin{align}
	h(\bm \theta\mid \mf r,\mf r_0, \bm \theta_0)&=\int_0^{2\pi}f_{\bm \theta_0}(\theta_0) h(\bm \theta\mid \mf r, \mf r_0, \bm \theta_0=\theta_0)\di \theta_0\\
	&=\int_0^{2\pi}f_{\bm \theta_0}(\theta_0) h(\bm \theta-\theta_0\mid\mf r, \mf r_0, \bm \theta_0=0)\di \theta_0\label{156}\\
	&=\int_0^{2\pi}f_{\bm \theta_0}(\theta_0) h(\bm \theta\mid \mf r_0, \bm \theta_0=0)\di \theta_0\label{158s}\\
	&=h(\bm \theta\mid \mf r, \mf r_0, \bm \theta_0=0).\label{166.5}
\end{align}
Here, \eqref{158s} follows because  differential entropy is invariant to translations \cite[Th.~8.6.3]{cover_information2}. 
Substituting \eqref{161}, \eqref{157}, \eqref{162}, and \eqref{166.5} into \eqref{156.5}, we obtain
\begin{align}
	I(\mf r_0,\bm \theta_0 ; \mf r, \bm \theta )&\leq\alpha\log\rlpo{\frac{\po+\pn}{\alpha}}+\log\lefto(\Gamma(\alpha)\right)+\log(\pi)+{\alpha}\frac{\expe\lefto[\mf r_0^2\right]+\pn}{P+\pn}\log e\nonumber\\
	&\qquad+(1-2\alpha)\expe\lefto[\log(\mf r)\right]-h\lefto(\mf r\mid \mf r_0\right) -h(\bm \theta\mid \mf r, \mf r_0, \bm \theta_0=0).\label{mi_up}
\end{align}
Fix $\lambda\geq 0$. We next upper bound $\ctwo{}$ using \eqref{mi_up} as 
\begin{align}
	\ctwo(P)&\leq \sup\rlao{I(\mf r_0,\bm \theta_0 ; \mf r, \bm \theta )+\lambda\rlp{1-\frac{\expe[\mf r_0^2]+\pn}{P+\pn}}}\\
	&\leq \alpha\log\rlpo{\frac{\po+\pn}{\alpha}}+\log\lefto(\Gamma(\alpha)\right)+\log(\pi)+\lambda\nonumber\\
	&\qquad+\sup\bigg\{({\alpha\log e}-\lambda)\frac{\expe\lefto[\mf r_0^2\right]+\pn}{P+\pn}
	+(1-2\alpha)\expe\lefto[\log(\mf r)\right]-h\lefto(\mf r\mid \mf r_0\right)\nonumber\\
	& \qquad-h(\bm \theta\mid \mf r, \mf r_0, \bm \theta_0=0)\bigg\}\label{171}
\end{align}
where the supremum is over the set of input probability distributions that satisfy \eqref{eq:power:constraint}. We complete the proof  by noting that the supremum in \eqref{171} is less or equal to $\max_{r_0>0}\{g_{\lambda,\alpha}(r_0, P)\}$, where $g_{\lambda,\alpha}(r_0, P)$ is defined in \eqref{gdef}.

\ifCLASSOPTIONcaptionsoff
  \newpage
\fi


 \end{document}